\newcommand{\kc}[1]{{\color{black}#1}}
  \providecommand\BibTeX{{%
    \normalfont B\kern-0.5em{\scshape i\kern-0.25em b}\kern-0.8em\TeX}}}
\begin{document}



\title{Optimized Cost Per Click in Online Advertising: \\ A Theoretical Analysis}


\author{Kaichen Zhang$^{1}$,
Zixuan Yuan$^{2}$,
Hui Xiong$^{1,3,*}$}

\affiliation{$^1$Artificial Intelligence Thrust, Hong Kong University of Science and Technology (Guangzhou) \country{} }
\affiliation{$^2$Financial Technology Thrust, Hong Kong University of Science and Technology (Guangzhou) \country{} }
\affiliation{$^3$Department of Computer Science and Engineering, Hong Kong University of Science and Technology \country{} }
\affiliation{kzhangbi@connect.ust.hk,zixuanyuan@hkust-gz.edu.cn,xionghui@ust.hk \country{}}









\begin{abstract}
In recent years, Optimized Cost Per Click (OCPC) and Optimized Cost Per Mille (OCPM) have emerged as the most widely adopted pricing models in the online advertising industry. 
However, the existing literature has yet to identify the specific conditions under which these models outperform traditional pricing models like Cost Per Click (CPC) and Cost Per Action (CPA).
To fill the gap, this paper builds an economic model that compares OCPC with CPC and CPA theoretically, which incorporates out-site scenarios and outside options as two key factors.
Our analysis reveals that OCPC can effectively replace CPA by tackling the problem of advertisers strategically manipulating conversion reporting in out-site scenarios where conversions occur outside the advertising platform. 
Furthermore, OCPC exhibits the potential to surpass CPC in platform payoffs by providing higher advertiser payoffs and consequently attracting more advertisers. 
However, if advertisers have less competitive outside options and consistently stay in the focal platform, the platform may achieve higher payoffs using CPC. 
Our findings deliver valuable insights for online advertising platforms in selecting optimal pricing models, and provide recommendations for further enhancing their payoffs. 
To the best of our knowledge, this is the first study to analyze OCPC from an economic perspective. 
\kc{Moreover, our analysis can be applied to the OCPM model as well.}

\end{abstract}



\begin{CCSXML}
<ccs2012>
<concept>
<concept_id>10002951.10003260.10003272</concept_id>
<concept_desc>Information systems~Online advertising</concept_desc>
<concept_significance>500</concept_significance>
</concept>
</ccs2012>
\end{CCSXML}

\ccsdesc[500]{Information systems~Online advertising}

%
\keywords{online advertising, optimized cost per mille, optimized cost per click, OCPM, OCPC, pricing models}



\settopmatter{printfolios=true}
\maketitle

\section{Introduction}
Recent decades have witnessed remarkable growth in the online advertising industry.
According to Interactive Advertising Bureau~\cite{iab2022}, 
U.S. revenues roared to \$209.7 billion in 2022, which marks an extraordinary 30-fold increase over the 2002 figures. 
Alongside this rapid expansion, the pricing models in online advertising have also consistently evolved over the years to adapt to market demands and technological advancements, as demonstrated in Figure~\ref{fig:timeline}.


In essence, the pricing model aims to determine how advertisers bid and pay for their ads to be displayed on advertising platforms. 
For instance, the Cost Per Mille (CPM) model allows advertisers to bid for ad displays. Then, the advertiser who offers the highest bid price will be secured with the right to display its ad. 
However, similar to traditional advertising media, CPM is not cost-efficient as it often distributes ads to unrelated users, which is well captured by the famous marketing quote,
"Half the money I spend on advertising is wasted; the trouble is I don't know which half." 
To address this issue, the performance-based pricing models, Cost Per Click (CPC) and Cost Per Action (CPA), were developed in quick succession. In the CPC model, advertisers bid for ad clicks and only pay when a user clicks their ad. 
The CPA model takes a step further, allowing advertisers to directly bid for a desired action (e.g. making a purchase), and only pay when the user completes the desired action. 
Each time the user completes the desired action, it results in a conversion.


Most recently, Facebook introduced Optimized Cost Per Click (OCPC) and Optimized Cost Per Mille (OCPM). 
Both models inherit the good property of CPA that allows direct bidding for a conversion, while advertisers are still obliged to pay when a click occurs or an ad is displayed, like CPC and CPM.
Hence, OCPC and OCPM have now become the most widely adopted pricing models in the industry, embraced by major platforms such as Google, Meta, TikTok, and Taobao.
In contrast, the CPA model, once hailed as the "Holy Grail" of online advertising, now faces restrictions or is outright banned in many cases. It has been relegated to a limited set of scenarios, often accompanied by qualification requirements.


\begin{figure*}[!t]
\centering
\includegraphics[width=\textwidth]{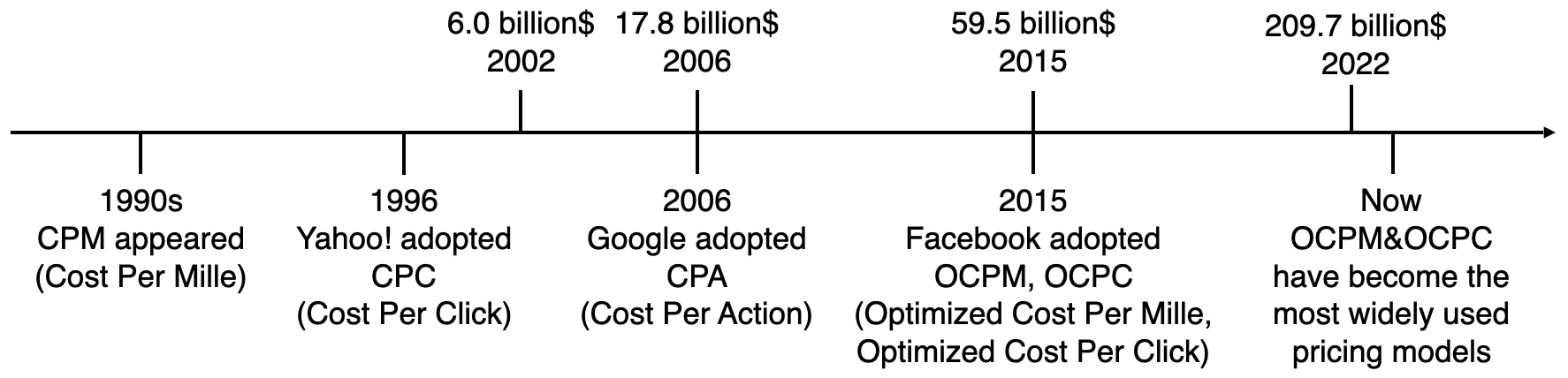}
\caption{U.S. Online Advertising Revenue and Pricing Model Milestones}\label{fig:timeline}
\end{figure*}

Despite their widespread adoption in practice, the existing literature has yet to identify the specific conditions under which OCPC and OCPM can outperform traditional models and the consequences of choosing them over alternatives. 
To fill this gap, we build an economic model to theoretically compare OCPC with traditional pricing models, CPC and CPA. Although our primary focus is on OCPC as an exemplar, it is worth noting that our analysis and insights can also apply to OCPM, considering their analogous structure. We include OCPM and CPM in Appendix \ref{app:other}.



In our model, we solve an extensive-form game between one focal platform and two distinct types of advertisers. The game progresses through several stages. First, the platform selects a pricing model among CPC, CPA, or OCPC. Second, advertisers decide whether to enter the subsequent auction based on the selected pricing model. Third, a second-price auction ensues for an ad display. Finally, advertisers finalize payments and all parties realize payoffs.


Our model incorporates two crucial and realistic factors that previous pricing model research has overlooked. Firstly, we consider the fact that advertisers have alternative options beyond the focal platform. Advertisers, particularly those with budget constraints, may opt not to enter the auction to maximize return on investment (ROI) unless their payoffs exceed their outside options. On the other hand, some advertisers do not face budget constraints and can continue bidding for ads, such as paid video games, as long as their average cost remains lower than the marginal gain. Consequently, our model features one advertiser with an outside option and another without, representing these two types.

Furthermore, we emphasize the impact of conversion rate prediction algorithms. These platforms now employ machine learning to predict conversion rates for each user and ad, a feat difficult for advertisers to achieve independently. However, the accuracy of these predictions hinges on an unbiased training sample. This requirement may be compromised when conversions occur outside of platforms and depend on advertisers' reporting. As a result, we distinguish between in-site and out-site scenarios based on whether advertisers can manipulate conversion reporting.


Our model explains why OCPC outperforms CPA in practice. 
CPA is vulnerable to conversion reporting manipulation, resulting in negligible platform payoffs in out-site scenarios. 
In contrast, although advertisers can still manipulate conversion rates under OCPC, our analysis shows that they possess an incentive to maintain accurate conversion rates. 
There, OCPC functions effectively across a wider range of out-site scenarios.

Additionally, this paper highlights the trade-offs between OCPC and CPC. At first glance, platforms might always favor the so-called "optimized" pricing model, OCPC, over CPC. This preference stems from the potential for increased platform payoffs through OCPC, as it can attract more advertisers by offering them higher payoffs as well. However, if advertisers have less competitive outside options and consistently select the focal platform, the platform may actually achieve higher payoffs using CPC.


We further generalize our findings for the optimal pricing model selection among a set of pricing models. We suggest that platforms should choose a model that strikes a balance between platform payoffs with a predetermined number of advertisers and advertiser payoffs that attract more advertisers. When user value is high, platforms are advised to adopt models that yield higher platform payoffs. Conversely, platforms should opt for models that provide greater advertiser payoffs in an alternative manner.



This paper also offers suggestions for platforms to further enhance their payoffs. Firstly, platforms should consider shifting their attention from predicting conversion rates to actively increasing them, which can improve payoffs in a dual-pronged way. Secondly, platforms can employ subsidies or coupons to indirectly reduce outside options, subsequently attracting more advertisers. Lastly, platforms can develop innovative pricing models that allow advertisers to bid for aspects that fall between clicks and conversions to fill the gap between CPC and OCPC.


In summary, this paper makes the following contributions:
\begin{itemize}
\item To the best of our knowledge, this is the first study to analyze OCPC from an economic perspective, providing theoretical foundations to online advertising pricing models.
\item We have identified two critical factors that account for OCPC's prevalence over other pricing models: OCPC surpasses CPA as it is applicable to the broader \textit{out-site} scenarios; OCPC outperforms CPC by attracting a greater number of advertisers with \textit{outside options}.
\item Our findings provide valuable insights and recommendations for online advertising platforms to select the optimal pricing model and enhance their overall payoffs.


\end{itemize}

\section{Related Literature}
\kc{Information systems have been widely used in marketing \cite{petsinis2024robust, zhang2020geodemographic}, among which Real-time Bidding (RTB) in online advertising allows ads to be bought and sold on a per-impression (per-ad display) basis in real-time \cite{yuan2013real}. }
This process is rooted in auction theories \cite{vickrey1961counterspeculation,clarke1971multipart,groves1973incentives,myerson1981optimal}, where an auction among advertisers is conducted as soon as a user lands on a website. Modern RTB predominantly employs the generalized second-price (GSP) auction model \cite{edelman2007internet,varian2007position}, in which the highest bidder secures the top ad placement and pays the second-highest bid. Similarly, the second-highest bidder acquires the second placement and pays the third-highest bid, and so forth. A comprehensive overview of online display advertising markets was provided by Choi et al. \cite{choi2020online}.

Economic models have been used to examine pricing models.
Hu \cite{hu2004performance} initially compared the CPM and CPC using the optimal contract design framework, while Asdemir et al. \cite{asdemir2012pricing} analyzed the same models under the principal-agent framework. Both studies argued that the performance-based CPC model can incentivize publishers and advertisers to enhance advertising performance. Agarwal et al. \cite{agarwal2009skewed} differentiated CPA from CPC. Hu et al. \cite{hu2016incentive} delved deeper into the trade-offs between CPA and CPC, focusing on incentive issues. Zhu et al. \cite{zhu2011hybrid} and Liu et al. \cite{liu2014information} investigated hybrid auctions that allow advertisers to select their preferred pricing models. 



However, existing literature on pricing models exhibits several gaps.
Firstly, to the best of our knowledge, no existing study has examined OCPC (or OCPM) as a pricing model. Although some papers \cite{zhu2017optimized,jain2021optimizing,tang2020optimized} have explored OCPC's algorithm system architecture, it is still unclear under which conditions OCPC should be employed in comparison to other pricing models.
Secondly, the extant literature on pricing models often assumes that advertisers always participate in the auction, thereby overlooking the significance of entry \cite{levin1994equilibrium,cao2013second,kaplan2022second,kirchkamp2009outside}.
This assumption may not accurately reflect real-world scenarios where advertisers decide whether to enter based on their outside options.
Thirdly, current research on pricing models does not consider the impact of conversion rate prediction algorithms. Lu et al. \cite{lu2017practical} highlighted that \kc{conversion trackers provided by platforms are just a piece of JavaScript code that can be arbitrarily excused or not by advertisers and}
online advertising platforms are no way to ascertain that they capture all the conversions without advertisers' truthful reports, which may considerably influence pricing models in a non-trivial manner.

\section{Preliminary}
In online advertising, an \textit{impression} occurs when an ad is displayed on a website or app. A \textit{click} occurs when a user clicks on the displayed ad. A \textit{conversion} occurs when a user completes a desired action, such as making a purchase, after clicking on the ad.

The \textit{CTR} (click-through rate) of an ad is the probability of its impression resulting in a click. The \textit{CVR} (conversion rate) of an ad is the probability of its click resulting in a conversion.

We define a pricing model based on what advertisers bid for and what they ultimately pay for. In \textit{CPC} (Cost Per Click), advertisers bid for clicks and pay for clicks. In \textit{CPA} (Cost Per Action), advertisers bid for conversions and pay for conversions. In \textit{OCPC} (Optimized Cost Per Click), advertisers bid for conversions but pay for clicks.

For instance, in the CPA model, an advertiser might bid \$10 for a conversion, signifying their willingness to pay up to \$10 for each conversion. If a conversion occurs, the advertiser has to pay the platform up to \$10. Alternatively, in the OCPC model, an advertiser can also bid \$10 for a conversion but it makes the payment when a click occurs. In this case, the payment could be up to \$1 per click, assuming the platform's predicted CVR is 0.1.

\kc{We categorize transactions that take place entirely within the ad platform as the \textit{in-site scenario}. Conversely, when users are directed from the platform to an external site after clicking on an ad, we refer to this as the \textit{out-site scenario}. To illustrate, Amazon belongs to the in-site scenario; Facebook belongs to the out-site scenario.}


\section{Model}
We build an economic model of a game between two representative advertisers and one advertising platform using the second-price sealed bid auction under various pricing models.

The purpose of economic models is a simplification of and abstraction from complex reality. Previous research on pricing models typically uses a few representative advertisers to capture the essence of the focused pricing models while greatly simplifying the analysis.\footnote{Google's Chief Economist Hal Varian advises: "An economic model is supposed to reveal the essence of what is going on: your model should be reduced to just those pieces that are required to make it work." \cite{varian2016build}} For instance, \cite{asdemir2012pricing} evaluates CPC and CPM by one agent advertiser; \cite{hu2016incentive} compares CPA and CPC by two advertisers with distinct types. We follow this trend while discussing various extensions to more realistic settings in Appendix \ref{app:extensions}.

\begin{table}[t!]
    \centering
    \begin{tabular}{l l}
        \toprule
        \textbf{Notation} & \textbf{Meaning} \\
        \midrule
        $c^i,p^i$ & Advertiser $i$'s realized CTR,CVR \\
        $C^i,P^i$ & Advertiser $i$'s CTR,CVR distribution\\
        $\mu^c_i,\mu^p_i$ & The mean of $C^i,P^i$ \\[+1mm]
        $\bar{c^i},\bar{p^i}$ & The platform's prediction on $c^i,p^i$ \\
        $\bar{C^i},\bar{P^i}$ & Advertiser $i$'s belief in $\bar{c^i},\bar{p^i}$ \\
        $m^i$ & Advertiser $i$'s marginal gain from a conversion \\
        $b^i$ & Advertiser $i$'s bid \\
        $e^i$ & Advertiser $i$'s equivalent bid on an impression \\
        $\alpha^i$ & Advertiser $i$'s conversion reporting strategy \\
        \bottomrule
    \end{tabular}
    \vspace{+2mm}
    \caption{Notation Summary}
    \label{table:notation}
    \vspace{-10mm}
\end{table}

\subsection{Settings} \label{sec:setting}
\kc{We have clearly divided the settings into four groups: \textbf{Advertiser}, \textbf{CTR and CVR}, \textbf{In-site and Out-site scenario}, and \textbf{Auction}. Table~\ref{table:notation} summarizes this paper's main notations. In addition, we justify this paper's assumptions in Appendix \ref{app:assumption}.}

\noindent
\textbf{Advertiser:} 

We denote the marginal gain from a conversion for advertiser $i$ as $m^i$, indicating the profit obtained from each sale/conversion.

We assume advertiser 1 has no outside option, while advertiser 2 has an outside option $r$. Consequently, advertiser 1 always enters the auction and advertiser 2 enters the auction if and only if its expected payoffs in the auction surpass its outside option, $r$.

\kc{This asymmetry setting accounts for advertiser diversity. Advertiser 1 represents the group of advertisers without budget constraints, who will pursue conversions at a cost lower than marginal gain across different platforms. Advertiser 2 represents another group of advertisers with budget constraints, who will spend their budget on the most valuable platform to maximize their profit. In practice, the two groups of advertisers may prefer auto-bidders that maximize conversion volume or that manage budgets respectively.}


\noindent
\textbf{CTR and CVR:} 

We represent advertiser $i$'s CTR as $c^i$ and CVR as $p^i$, drawn from the respective distributions $C^i$ and $P^i$. We define $E(C^i)=\mu^c_i, Var(C^i)=\sigma^c_i$ and $E(P^i)=\mu^p_i, Var(P^i)=\sigma^p_i$.


Online advertising platforms now employ machine learning to predict CTR and CVR for each ad, a feat difficult for advertisers to achieve independently. Therefore, we assume that the platform's decisions are based on its predictions of $c^i$ and $p^i$, while advertisers make decisions based on the distribution of $C^i$ and $P^i$ and their belief in the accuracy of the platform's predictions.


We denote the platform's prediction for $c^i$ as $\bar{c^i}$ and $p^i$ as $\bar{p^i}$. We denote advertiser $i$'s belief in $\bar{c^i}$ as $\bar{C^i}$ and belief in $\bar{p^i}$ as $\bar{P^i}$. 

For example, consider advertiser 1's CTR follows a uniform distribution: $C^1\sim U(0.2,0.4)$. In a specific auction, the actual CTR could be 0.36, $c^1=0.36$. If the platform consistently underestimates the CTR by 50\%, then the predicted CTR would be $\bar{c^1}=0.5*0.36=0.18$. Although advertiser 1 is not aware of the actual CTR (0.36) or the platform's prediction (0.18), it can still form a belief about the platform's prediction, denoted as $\bar{C^1}\sim U(0.1,0.2)$.


\noindent
\textbf{In-site and Out-site scenario:} 

We distinguish between in-site and out-site scenarios based on whether advertisers can manipulate conversion reporting.

In the in-site scenario, we assume the platform's predictions are accurate, with $\bar{c^i}=c^i$ and $\bar{p^i}=p^i$, and advertisers' beliefs are consistent, with $\bar{C^i}=C^i$ and $\bar{P^i}=P^i$. In the out-site scenario, advertiser $i$ can establish a reporting strategy $\alpha^i \in [0,1]$, which signifies that a conversion is only reported with probability $\alpha^i$ upon receipt. We then assume the platform's predictions to be $\bar{c^i}=c^i$ and $\bar{p^i}=\alpha^ip^i$, with advertisers' beliefs at $\bar{C^i}=C^i$ and $\bar{P^i}=\alpha^iP^i$.


Our assumptions regarding the platform's predictions are based on the characteristics of real-world CTR/CVR supervised learning, making them suitable for analyzing pricing models. This approach takes into account the biased sampling error that may arise from advertisers' strategic conversion reports while excluding trivial learning errors to emphasize the primary role of pricing models.

\noindent
\textbf{Auction:} 

We represent advertiser $i$'s bid as $b^i$. In CPC, $b^i$ denotes advertiser $i$'s bid per click, whereas, in CPA and OCPC, $b^i$ signifies advertiser $i$'s bid per conversion.


We introduce the concept of advertiser i's equivalent bid on an impression as $e^i$, which refers to the industry concept of ECPM \cite{zhu2017optimized}. The $e^i$ facilitates comparisons across various pricing models. For example, consider a scenario where $\bar{c^1}=0.3$, $\bar{p^1}=0.2$, and advertiser 1 bids 100 for a conversion, denoted as $b^1=100$. In this case, the equivalent bid is calculated as $e^1=0.3*0.2*100=6$. This suggests that, from the platform's viewpoint, advertiser 1 bidding 6 for an impression is equivalent. If advertiser 1 bids $10$ for a click, the equivalent bid becomes $e^1=10*0.3=3$. 



The platform selects the advertiser with the highest equivalent bid on an impression as the winner, denoted as $w=max_i e^i$. Another advertiser is deemed the loser, represented as $l=min_i e^i$.

\subsection{Timeline}
The following timeline outlines the progression of the game:
\begin{enumerate}
    \item The platform selects a model among CPC, CPA, or OCPC.
    \item Advertiser 2 determines whether to enter the auction.
    \item Advertisers select their bids and, if in the out-site scenario, also decide on their reporting strategy $\alpha^i$.
    \item The actual values of $c^i$ and $p^i$ are realized, while the platform predicts $\bar{c^i}$ and $\bar{p^i}$, and calculates $e^i$.
    \item The platform identifies the auction winner ($w=argmax_i e^i$) and display its ad.
    \item The winner makes payments and all parties realize payoffs.
\end{enumerate}

We will analyze the game in both the in-site and out-site scenarios. The key difference between the two scenarios is that advertisers need to decide on a conversion reporting strategy in the out-site scenario (step 3), which also affects the payment process (step 6).

To solve the overall game, we will employ backward induction, focusing on each subgame where the platform has chosen a specific pricing model and both advertisers have opted to participate in the auction. We will begin our analysis from step 3 for each subgame.



\section{in-site Scenario}
We initiate our analysis from the in-site scenario, in which advertising platforms can accurately track conversions. For each subgame, we will explore the optimal bidding strategy employed by advertisers and evaluate the resulting payoffs for all parties involved.


\subsection{Cost Per Click}
In CPC, advertisers bid for clicks and pay for clicks. 

As advertisers bid for clicks, the platform predicts advertiser $i$'s equivalent bid on an impression as $e^i=\bar{c^i}b^i$. Following the second-price auction, the platform selects the advertiser with the highest equivalent bid on an impression, denoted as $w=argmax_i e^i=argmax_i \bar{c^i}b^i$, to be the winner and receive the impression.


As advertisers pay for clicks, the auction winner has to pay $\frac{\bar{c^l}b^l}{\bar{c^w}}$ if a click occurs. The loser pays 0. This payment is based on the intuition that the equivalent payment for an impression is $\bar{c^l}b^l$, where $l$ denotes the advertiser with the second-highest bid in the model. However, since advertisers pay for a click in CPC, the auction winner should pay $\frac{\bar{c^l}b^l}{\bar{c^w}}$ for each click because every $\frac{1}{\bar{c^w}}$ impression generates a click on average.


Given another advertiser $k$'s equivalent bid on an impression $e^k$,
advertiser $i$'s utility is:
\begin{equation*}
\pi^i(b^i|e^k)
= Pr(\bar{C^i}b^i>e^k)E(C^iP^im^i-C^i \frac{e^k}{\bar{C^i}}|\bar{C^i}b^i>e^k)
\end{equation*}
when advertiser $i$ bid $b^i$ per click in CPC.

The first component, $Pr(\bar{C^i}b^i>e^k)$, represents advertiser $i$'s winning probability. The second component represents its expected gain upon winning. The $C^iP^im^i$ represents its revenue from the auctioned impression, while $C^i \frac{e^k}{\bar{C^i}}$ represents the associated cost.

\begin{proposition}
Under CPC, a dominant strategy for advertiser $i$ is to bid $b^i=\mu^p_im^i$, which is equal to the product of the expected conversion rate and the marginal gain from a conversion.
\end{proposition}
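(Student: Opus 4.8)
The plan is to verify the second-price ``bid-your-value'' logic, adapted to the fact that the per-impression bid and value are both random through $C^i$ and $P^i$. First I would specialize the utility displayed above to the in-site scenario, where $\bar{C^i}=C^i$, so that the cost term collapses, $C^i e^k/\bar{C^i}=e^k$, and
\[
\pi^i(b^i\mid e^k)=E\!\left[(C^iP^im^i-e^k)\,\mathbf{1}(C^ib^i>e^k)\right].
\]
The key observation is that under CPC the winning event $\{C^ib^i>e^k\}$ depends on the randomness only through $C^i$, never through $P^i$. Assuming, as is standard for CTR/CVR, that $C^i$ and $P^i$ are independent (mean-independence $E[P^i\mid C^i]=\mu^p_i$ is all I actually use), I would condition on $C^i$ and replace $P^i$ by its mean, reducing the objective to
\[
\pi^i(b^i\mid e^k)=E\!\left[(C^i\mu^p_i m^i-e^k)\,\mathbf{1}(C^ib^i>e^k)\right],
\]
so that both the payoff and the win region are now governed by $C^i$ alone.

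Next I would run the standard exposure argument for second-price auctions, applied pointwise in the realization of $C^i$ and for an arbitrary fixed $e^k$. Writing the win condition as $b^i>e^k/C^i$ and the conditionally profitable condition as $\mu^p_i m^i>e^k/C^i$, the bid $b^i=\mu^p_i m^i$ makes these two events coincide exactly: the advertiser wins precisely on those realizations of $C^i$ for which the expected net payoff $C^i\mu^p_i m^i-e^k$ is nonnegative. I would then compare with any deviation. Overbidding ($b^i>\mu^p_i m^i$) adds exactly the realizations with $e^k/C^i\in(\mu^p_i m^i,b^i)$, on which $C^i\mu^p_i m^i-e^k<0$, so it can only lower the payoff; underbidding ($b^i<\mu^p_i m^i$) forgoes the realizations with $e^k/C^i\in(b^i,\mu^p_i m^i)$, on which the payoff would have been positive. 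Hence $b^i=\mu^p_i m^i$ is a best response.

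Finally I would stress the \emph{dominance}: the whole comparison is carried out for a fixed but arbitrary $e^k$, and the optimizer $\mu^p_i m^i$ never references $e^k$. Since the opponent enters advertiser $i$'s problem only through $e^k$, the bid $b^i=\mu^p_i m^i$ is optimal against every opponent strategy, i.e.\ a (weakly) dominant strategy. As a cross-check I would also write the objective as $\int_{e^k/b^i}^{\infty}(c\,\mu^p_i m^i-e^k)\,f_{C^i}(c)\,dc$ and differentiate in $b^i$: the marginal realization at the threshold $c=e^k/b^i$ contributes a term proportional to $e^k(\mu^p_i m^i/b^i-1)$, which is positive for $b^i<\mu^p_i m^i$ and negative for $b^i>\mu^p_i m^i$, pinning the unique interior maximizer at $b^i=\mu^p_i m^i$ independently of $e^k$.

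The step I expect to be the main obstacle is the second reduction, namely justifying that $P^i$ may be replaced by $\mu^p_i$. This is clean only because in CPC the win region is $P^i$-free while $C^i$ and $P^i$ are (mean-)independent, so I would make that assumption explicit. I would also handle the boundary values of $e^k$ with care: when the threshold $e^k/b^i$ leaves the support of $C^i$, every bid ties at payoff $0$, so dominance degrades to weak there but is never violated, which is consistent with the proposition's claim.
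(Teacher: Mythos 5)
Your proposal is correct and follows essentially the same route as the paper's proof: first replace $P^i$ by $\mu^p_i$ using its independence from the winning event, then run the standard over/under-bidding decomposition against an arbitrary fixed $e^k$. The only cosmetic difference is that you specialize to $\bar{C^i}=C^i$ up front (so the cost term collapses to $e^k$) while the paper keeps $\bar{C^i}$ general, and your explicit statement of the mean-independence assumption and the derivative cross-check are welcome but not substantively different.
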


\begin{proof}
We show that $\pi^i(\mu^p_im^i|e^k)\ge\pi^i(b^i|e^k)$ for any $e^k$.
\begin{align*}
\pi^i(b^i|e^k)
= Pr(\bar{C^i}b^i>e^k)E(C^iP^im^i-C^i \frac{e^k}{\bar{C^i}}|\bar{C^i}b^i>e^k)
\\
=
Pr(\bar{C^i}b^i>e^k)E(C^i\mu^p_im^i-C^i \frac{e^k}{\bar{C^i}}|\bar{C^i}b^i>e^k)
\end{align*}
because $P^i$ is independent of the condition $\bar{C^i}b^i>e^k$.

When $b^i>\mu^p_im^i$, 
\begin{align*}
&\pi^i(b^i|e^k)
\\
&=Pr(\bar{C^i}\mu^p_im^i>e^k)E(C^i\mu^p_im^i-C^i \frac{e^k}{\bar{C^i}}|\bar{C^i}\mu^p_im^i>e^k)
\\
&+Pr(\bar{C^i}b^i >e^k>\bar{C^i}\mu^p_im^i)E(C^i\mu^p_im^i-C^i \frac{e^k}{\bar{C^i}}|\bar{C^i}b^i >e^k>\bar{C^i}\mu^p_im^i)
\\
&<=
Pr(\bar{C^i}\mu^p_im^i>e^k)E(C^i\mu^p_im^i-C^i \frac{e^k}{\bar{C^i}}|\bar{C^i}\mu^p_im^i>e^k) = \pi^i(\mu^p_im^i|e^k)
\end{align*}
because $E(C^i\mu^p_im^i-C^i \frac{e^k}{\bar{C^i}}|\bar{C^i}b^i >e^k>\bar{C^i}\mu^p_im^i)<0$

When $b^i<\mu^p_im^i$,
\begin{align*}
&\pi^i(\mu^p_im^i|e^k)
\\&=
Pr(\bar{C^i}b^i>e^k)E(C^i\mu^p_im^i-C^i \frac{e^k}{\bar{C^i}}|\bar{C^i}b^i>e^k)
\\&
+Pr(\bar{C^i}b^i <e^k<\bar{C^i}\mu^p_im^i)E(C^i\mu^p_im^i-C^i \frac{e^k}{\bar{C^i}}|\bar{C^i}b^i <e^k<\bar{C^i}\mu^p_im^i)
\\&
>=
Pr(\bar{C^i}b^i>e^k)E(C^i\mu^p_im^i-C^i \frac{e^k}{\bar{C^i}}|\bar{C^i}b^i>e^k)
= \pi^i(b^i|e^k)
\end{align*}
So $\pi^i(\mu^p_im^i|e^k)>=\pi^i(b^i|e^k)$ for any $e^k$.
\end{proof}

This optimal bidding strategy aligns with the truth-telling property in second-price auctions, where an advertiser's optimal bid is equal to their valuation of the auctioned impression. However, due to the lack of knowledge regarding the actual conversion rate ($p^i$), advertisers can only estimate the valuation using their average conversion rate ($\mu^p_i$). In practice, this average conversion rate can be calculated by dividing the total number of conversions by the total number of clicks.


Next, we determine the payoffs for each party in the CPC model, focusing on the in-site scenario, denoted by $I-$.

Advertiser $i$'s payoff, $\pi^i_{I-CPC}$: 
\begin{align*}
&E(C^i
P^im^i-C^k\mu^p_km^k|C^i
\mu^p_im^i>C^k
\mu^p_km^k)Pr(C^i
\mu^p_im^i>C^k
\mu^p_km^k)
\\ = &
E(C^i
\mu^p_im^i-C^k\mu^p_km^k|C^i
\mu^p_im^i>C^k
\mu^p_km^k)Pr(C^i
\mu^p_im^i>C^k
\mu^p_km^k)
  \end{align*}

Platform's payoff, $\pi^P_{I-CPC}$: 
\begin{align*}
&(E(C^2\mu^p_2m^2|C^1
\mu^p_1m^1>C^2
\mu^p_2m^2)Pr(C^1
\mu^p_1m^1>C^2
\mu^p_2m^2)
+\\
&E(C^1\mu^p_1m^1|C^1
\mu^p_1m^1<C^2
\mu^p_2m^2)Pr(C^1
\mu^p_1m^1<C^2
\mu^p_2m^2))
\\&=
E(min(C^1
\mu^p_1m^1,C^2
\mu^p_2m^2))
\end{align*}

Social welfare, $\pi^S_{I-CPC}=E(max(C^1
\mu^p_1m^1,C^2
\mu^p_2m^2))$.
\subsection{Cost Per Action}
In CPA, advertisers bid for conversions and pay for conversions.

As advertisers bid for conversions, the platform's prediction on advertiser $i$'s equivalent bid on an impression is $e^i=\bar{c^i}\bar{p^i}b^i$. 

As advertisers pay for conversions, the auction winner has to pay $\frac{\bar{c^l}\bar{p^l}b^l}{\bar{c^w}\bar{p^w}}$ if a conversion occurs. The loser pays 0. In the in-site scenario, the platform can accurately track conversions, which ensures a conversion will always result in a payment.


Given another advertiser $k$'s equivalent bid on an impression $e^k$,
advertiser $i$'s utility is:
\begin{equation*}
\pi^i(b^i|e^k)= Pr(\bar{C^i}\bar{P^i}b^i>e^k)E(C^iP^im^i-C^iP^i \frac{e^k}{\bar{C^i}\bar{P^i}}|\bar{C^i}\bar{P^i}b^i>e^k)
\end{equation*}
when advertiser $i$ bid $b^i$ per conversion in CPA.

\begin{proposition} \label{pro:CPA}
Under CPA in the in-site scenario, a dominant strategy for advertiser $i$ is to bid $b^i=m^i$.
\end{proposition}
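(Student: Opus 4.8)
The plan is to mirror the dominant-strategy argument used for the preceding CPC proposition, after first specializing the general CPA utility to the in-site scenario. The starting point is the in-site identity $\bar{C^i}=C^i$ and $\bar{P^i}=P^i$: substituting these into the cost term $C^iP^i\frac{e^k}{\bar{C^i}\bar{P^i}}$ collapses it to the constant $e^k$, so that advertiser $i$'s utility simplifies to $\pi^i(b^i\mid e^k)=Pr(C^iP^ib^i>e^k)\,E(C^iP^im^i-e^k\mid C^iP^ib^i>e^k)$. This is precisely the structure of a standard second-price auction in which the true per-impression value is the realized quantity $C^iP^im^i$ and the price conditional on winning is the opponent's equivalent bid $e^k$; bidding $b^i=m^i$ makes the equivalent bid $C^iP^ib^i=C^iP^im^i$ coincide with this true value.

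I would then split into the two deviation cases exactly as in the CPC proof. For $b^i>m^i$, I decompose the winning event $\{C^iP^ib^i>e^k\}$ into the truthful region $\{C^iP^im^i>e^k\}$ and the extra region $\{C^iP^ib^i>e^k>C^iP^im^i\}$; on the extra region the integrand $C^iP^im^i-e^k$ is pointwise negative, so the overbid only adds negative contributions and $\pi^i(b^i\mid e^k)\le\pi^i(m^i\mid e^k)$. For $b^i<m^i$, I write the truthful payoff $\pi^i(m^i\mid e^k)$ as the underbidder's winning region plus the forgone region $\{C^iP^ib^i<e^k<C^iP^im^i\}$, on which the integrand is pointwise positive, so the underbid discards strictly positive contributions and again $\pi^i(m^i\mid e^k)\ge\pi^i(b^i\mid e^k)$. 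Since both inequalities hold for every value of $e^k$, bidding $b^i=m^i$ dominates.

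The main thing to get right, and the reason the statement is restricted to the in-site scenario, is the collapse of the cost term to the constant $e^k$: this is what removes any dependence of the payment on the realized $C^i,P^i$ and lets the truthful-bidding argument go through. Notably, unlike the CPC proposition, no independence-of-$P^i$ step is needed here, because payment is per conversion, so the realized value $C^iP^im^i$ and the equivalent bid $C^iP^im^i$ align directly rather than only in expectation. The remaining work is purely bookkeeping: checking that the extra and forgone regions are disjoint from the common winning region so the decompositions are valid, and confirming the pointwise sign of $C^iP^im^i-e^k$ on each boundary region.
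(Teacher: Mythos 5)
Your proposal is correct and follows essentially the same route as the paper's proof in Appendix B: the same two-case decomposition of the winning event into a common region plus an extra (or forgone) region, with a pointwise sign argument on $C^iP^im^i-e^k$ over that region. The only cosmetic difference is that you substitute the in-site identities $\bar{C^i}=C^i$, $\bar{P^i}=P^i$ up front to collapse the payment term to $e^k$, whereas the paper keeps the $\bar{C^i}\bar{P^i}$ notation throughout; the underlying inequality is identical.
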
 

The proof of Proposition \ref{pro:CPA} can be found in Appendix \ref{app:CPA}.

Proposition \ref{pro:CPA} reinforces the assertion that "CPA, or cost per action, is the Holy Grail for targeted advertising", because advertisers only need to bid their marginal gain from a conversion without any estimation on CTR and CVR. This is particularly important for fresh ad campaigns and advertisers who are new to the platform because there is no historical data for accurate estimation. These advertisers can circumvent the potential losses resulting from biased estimations, by directly bidding their marginal gain.


We then examine the payoffs associated with CPA for each party.

Advertiser $i$'s payoff, $\pi^i_{I-CPA}$: 
\begin{align*}
E(C^i
P^im^i-C^kP^km^k|C^i
P^im^i>C^kP^km^k)Pr(C^i
P^im^i>C^kP^km^k)&
\end{align*}

Platform's payoff, $\pi^P_{I-CPA}=E(min(C^1P^1m^1,C^2P^2m^2))$.

Social welfare, $\pi^S_{I-CPA}=E(max(C^1P^1m^1,C^2P^2m^2))$.
\subsection{Optimized Cost Per Click}
In OCPC, advertisers bid for conversions but pay for clicks. 

As advertisers bid for conversions, the platform's prediction on advertiser $i$'s equivalent bid on an impression is $e^i=\bar{c^i}\bar{p^i}b^i$. 

As advertisers pay for clicks, the auction winner has to pay $\frac{\bar{c^l}\bar{p^l}b^l}{\bar{c^w}}$ if a click occurs. The loser pays 0.

In both CPA and OCPC, advertisers bid for conversions, resulting in an equivalent bid on an impression, denoted as $e^i$. However, OCPC's payment structure differs from CPA, as advertisers pay for clicks instead of conversions. OCPC's payment would be calculated as $\bar{p^l}b^l$ per click, rather than $b^l$ per conversion.



Given another advertiser $k$'s equivalent bid on an impression $e^k$,
advertiser $i$'s utility is:
\[\pi^i(b^i|e^k)= Pr(\bar{C^i}\bar{P^i}b^i>e^k)E(C^iP^im^i-C^i \frac{e^k}{\bar{C^i}}|\bar{C^i}\bar{P^i}b^i>e^k)\]
when advertiser $i$ bid $b^i$ per conversion in OCPC.

\begin{proposition} \label{pro:OCPC}
Under OCPC in the in-site scenario, a dominant strategy for advertiser $i$ is to bid $b^i=m^i$.
\end{proposition}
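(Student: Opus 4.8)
The plan is to prove dominance exactly as in the first proposition (the CPC case): fix an arbitrary value of the opponent's equivalent bid $e^k$ and show $\pi^i(m^i\,|\,e^k)\ge\pi^i(b^i\,|\,e^k)$ for every $b^i$. Since this holds pointwise in $e^k$, it continues to hold after taking expectations over any distribution that the opponent's behavior induces on $e^k$, which is precisely the statement that $b^i=m^i$ is a dominant strategy.

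First I would feed in the in-site assumptions $\bar{C^i}=C^i$ and $\bar{P^i}=P^i$ to simplify the stated utility. Under them the winning event $\bar{C^i}\bar{P^i}b^i>e^k$ becomes $C^iP^ib^i>e^k$, and the cost term collapses: because $\bar{C^i}=C^i$ as an equality of realizations, the per-click price $e^k/\bar{c^w}$ times the realized click rate $c^i$ returns $e^k$, so $C^i\frac{e^k}{\bar{C^i}}=e^k$. Writing $Z^i=C^iP^i$ for advertiser $i$'s conversion probability per impression, the utility reduces to
\[
\pi^i(b^i\,|\,e^k)=Pr(Z^ib^i>e^k)\,E\!\left(Z^im^i-e^k\,\middle|\,Z^ib^i>e^k\right).
\]
The crucial structural observation, and the reason the optimum is the unrescaled bid $b^i=m^i$, is that both the winning event $\{Z^ib^i>e^k\}$ and the realized surplus $Z^im^i-e^k$ are now governed by the same quantity $Z^i$. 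Consequently no independence step (the one used in the CPC case to replace $P^i$ by $\mu^p_i$ so that the surplus would align with a $C^i$-only winning condition) is needed here; the alignment is automatic.

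Next I would run the region-splitting comparison of the first proposition, which is also the template for the CPA argument in Appendix \ref{app:CPA}. With $b^i=m^i$ the advertiser wins precisely on $\{Z^im^i>e^k\}$, i.e.\ exactly where the surplus is positive. For $b^i>m^i$ the advertiser additionally wins on $\{Z^ib^i>e^k>Z^im^i\}$, where $E(Z^im^i-e^k\,|\,\cdot)<0$, so the extra mass only lowers the payoff. For $b^i<m^i$ the advertiser forgoes the region $\{Z^ib^i<e^k<Z^im^i\}$, where the surplus is positive, again lowering the payoff. Decomposing $\pi^i(b^i\,|\,e^k)$ into the region common to both bids plus the differing region and checking the sign of the conditional expectation on that differing region yields $\pi^i(m^i\,|\,e^k)\ge\pi^i(b^i\,|\,e^k)$ in both cases.

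The step I expect to require the most care is the justification that the cost term equals $e^k$ under the in-site assumption: one must be explicit that $\bar{C^i}=C^i$ is an equality of realizations rather than merely of distributions, and that the payment is per click (hence carries no $P^i$ factor), so that the realized click rate exactly cancels the denominator of the per-click price. Once this deterministic cancellation is pinned down, the remainder is the standard truth-telling sign analysis on the boundary regions and involves no genuinely new estimate.
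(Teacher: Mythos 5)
Your argument is correct and follows essentially the same route as the paper's proof in Appendix~\ref{app:OCPC}: a region-splitting comparison showing that bidding above $m^i$ only adds winning events where the surplus is negative and bidding below $m^i$ only forfeits events where it is positive. The only difference is cosmetic --- you substitute the in-site equalities $\bar{C^i}=C^i$, $\bar{P^i}=P^i$ up front to collapse the cost term to $e^k$ and work with $Z^i=C^iP^i$, whereas the paper carries the barred notation through and uses those same equalities implicitly when signing the conditional expectations on the boundary regions.
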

The proof of Proposition \ref{pro:OCPC} can be found in Appendix \ref{app:OCPC}. We then derive the payoffs for each party in OCPC.

Advertiser $i$'s payoff, $\pi^i_{I-OCPC}$: 
\begin{align*}
E(C^i
P^im^i-C^kP^km^k|C^i
P^im^i>C^kP^km^k)Pr(C^i
P^im^i>C^kP^km^k)
\end{align*}

Platform's payoff, $\pi^P_{I-OCPC}=E(min(C^1P^1m^1,C^2P^2m^2))$.

Social welfare, $\pi^S_{I-OCPC}=E(max(C^1P^1m^1,C^2P^2m^2))$.

Focusing on the in-site scenario, we observe that advertisers employ a similar bidding strategy in OCPC as in CPA, wherein they conveniently bid their marginal gain. Moreover, the payoffs for each party remain consistent across both pricing models. We will provide a comprehensive analysis of these aspects in Section \ref{sec:analysis}, following our discussion of the out-site scenario in Section \ref{sec:outsite}.

\section{out-site Scenario} \label{sec:outsite}
In this section, we discuss CPC, CPA, and OCPC in the out-site scenario, where advertisers can manipulate conversion reporting.


\subsection{Cost Per Click}
Recall that in CPC, advertisers bid for clicks and pay for clicks. The auction winner, denoted as $w = \text{argmax}_i \bar{c^i}b^i$, receives the auctioned impression and has to pay $\frac{\bar{c^l}b^l}{\bar{c^w}}$ if a click occurs.


It is important to note that CPC does not require online platforms to predict conversion rates. This simplifies the deployment of the CPC model, which likely explains their emergence before CPA.


Additionally, in CPC, payments are made irrespective of conversions. The transaction between the platform and the winning advertiser concludes upon a click, whether or not it occurs. From that point onward, the advertiser bears the risk or benefit of a click eventually leading to a conversion.


\kc{These characteristics show CPC is irrelevant to CVR predictions. As a result, CPC does not differ between the in-site and out-site scenarios that differ in CVR prediction manipulation: }



\begin{proposition}
Under CPC, there is no difference between the subgame in the in-site scenario and the subgame in the out-site scenario.
\end{proposition}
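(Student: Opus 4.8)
The plan is to show that every quantity governing play and payoffs in the CPC subgame depends solely on the CTR objects $\bar{c^i}$ and $\bar{C^i}$, and never on the CVR objects $\bar{p^i}$ or $\bar{P^i}$; since the CTR objects are identical across the two scenarios, the subgames must coincide. First I would isolate exactly what distinguishes them. By the definitions in Section~\ref{sec:setting}, the in-site and out-site scenarios agree on $\bar{c^i}=c^i$ and $\bar{C^i}=C^i$ in both cases, and differ \emph{only} in the CVR terms: the in-site scenario fixes $\bar{p^i}=p^i$ and $\bar{P^i}=P^i$, whereas the out-site scenario replaces them with $\bar{p^i}=\alpha^i p^i$ and $\bar{P^i}=\alpha^i P^i$ for a chosen reporting strategy $\alpha^i\in[0,1]$.

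Next I would trace each component of the CPC subgame against this list. The equivalent bid $e^i=\bar{c^i}b^i$, the winner rule $w=\mathrm{argmax}_i \bar{c^i}b^i$, and the per-click payment $\frac{\bar{c^l}b^l}{\bar{c^w}}$ all involve only $\bar{c^i}$. For the advertiser utility
\[\pi^i(b^i|e^k)= Pr(\bar{C^i}b^i>e^k)E(C^iP^im^i-C^i \frac{e^k}{\bar{C^i}}|\bar{C^i}b^i>e^k),\]
the winning event $\{\bar{C^i}b^i>e^k\}$ and the cost term $C^i\frac{e^k}{\bar{C^i}}$ depend only on $\bar{C^i}$, while the revenue term $C^iP^im^i$ is driven by the advertiser's \emph{true} conversion rate $P^i$ — real conversions produce real sales irrespective of what is reported to the platform — rather than the belief $\bar{P^i}$. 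Hence neither $\bar{p^i}$ nor $\bar{P^i}$ enters winner selection, payment, or any advertiser's utility, and the same is true of the platform payoff and social welfare, which likewise reduce to functions of $\bar{c^i}$ and the true $C^i,P^i$.

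I would then conclude that the reporting strategy $\alpha^i$, whose \emph{only} effect is to scale $\bar{p^i}$ and $\bar{P^i}$, is entirely payoff-irrelevant in CPC: it alters no agent's payoff function, so every profile $(\alpha^1,\alpha^2)$ induces exactly the bidding incentives and realized payoffs of the in-site subgame. The out-site CPC subgame therefore collapses onto the in-site CPC subgame, establishing the claimed equivalence.

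The main obstacle is precisely this last step. The out-site scenario endows each advertiser with an extra action $\alpha^i$ that the in-site scenario does not have, so the two games do not literally share a strategy space and are not identical as formal objects. The argument must show that this additional dimension is wholly inert — that it appears in no payoff expression whatsoever — so that the games are \emph{strategically equivalent}; the delicate point to get right is that the advertiser's conversion revenue is governed by $P^i$ and not by the reported belief $\bar{P^i}$, which is what guarantees that manipulating $\alpha^i$ cannot alter a CPC advertiser's earnings.
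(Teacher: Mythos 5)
Your argument is correct and follows essentially the same route as the paper, which justifies this proposition only by the informal observation that CPC is ``irrelevant to CVR predictions'' while the two scenarios differ only in CVR prediction manipulation. You make that observation precise by checking that $\bar{p^i}$ and $\bar{P^i}$ appear in no payoff-relevant quantity and by explicitly handling the extra action $\alpha^i$ as strategically inert, which is a more careful rendering of the same idea rather than a different approach.
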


Advertiser $i$ also bids $b^i = \mu^p_im^i$ for a click. Furthermore, the payoffs for each party in the out-site scenario ($O-$) are identical to the payoffs in the in-site scenario ($I-$).


Advertiser $i$'s payoff, $\pi^i_{O-CPC}$: 
\begin{align*}
&E(C^i
\mu^p_im^i-C^k\mu^p_km^k|C^i
\mu^p_im^i>C^k
\mu^p_km^k)Pr(C^i
\mu^p_im^i>C^k
\mu^p_km^k)
\end{align*}

Platform's payoff, $\pi^P_{O-CPC}=E(min(C^1
\mu^p_1m^1,C^2
\mu^p_2m^2))$.

Social welfare, $\pi^S_{O-CPC}=E(max(C^1
\mu^p_1m^1,C^2
\mu^p_2m^2))$.
\subsection{Cost Per Action}
Recall that in CPA, advertisers bid for conversions and pay for conversions. The auction winner, represented by $w=argmax_i \bar{c^i}\bar{p^i}b^i$, receives the impression and must pay $\frac{\bar{c^l}\bar{p^l}b^l}{\bar{c^w}\bar{p^w}}$ if a conversion occurs.



The main difference in the in-site scenario is that advertiser $i$ will select a reporting strategy $\alpha^i \in [0,1]$, indicating that it only reports a conversion with probability $\alpha^i$ when a conversion occurs.

The reporting strategy $\alpha^i$ impacts the game in two ways. First, by setting $\alpha^i<1$, the advertiser can make the payment less frequent. This creates an incentive for advertisers not to report conversions, as it can reduce their payment. However, the advertising platform will also adjust its CVR prediction to $\bar{p^i}=\alpha^ip^i$, which decreases the likelihood of advertiser $i$ winning the auction. As a result, the overall effect of conversion self-reporting remains unclear.


Thus, given another advertiser $k$'s equivalent bid on an impression $e^k$,
advertiser $i$'s utility is:
\begin{equation*}
\pi^i(b^i,\alpha^i|e^k)= Pr(\bar{C^i}\bar{P^i}b^i>e^k)E(C^iP^im^i-C^i\alpha^iP^i \frac{e^k}{\bar{C^i}\bar{P^i}}|\bar{C^i}\bar{P^i}b^i>e^k)
\end{equation*}
when advertiser $i$ bid $b^i$ per conversion in CPA. Note that there is an additional $\alpha^i$ term in the out-site scenario.

\begin{proposition} \label{pro:i_CPA}
Under CPA in the out-site scenario, an equilibrium does not exist, as advertisers can always obtain a higher payoff by reducing their reporting strategy $\alpha^i$.
\end{proposition}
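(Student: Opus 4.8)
The plan is to exploit the dual role that the reporting strategy $\alpha^i$ plays in the out-site utility and to show that its net marginal effect on payoff is unambiguously negative. First I would substitute the out-site beliefs $\bar{C^i}=C^i$ and $\bar{P^i}=\alpha^i P^i$ and isolate the two places where $\alpha^i$ enters. It appears through the platform's \emph{prediction} $\bar{P^i}$, which governs both the winning event $\{\bar{C^i}\bar{P^i}b^i>e^k\}$ and the per-conversion charge $e^k/(\bar{C^i}\bar{P^i})$ used for allocation and pricing; and it appears through the \emph{actual} reporting frequency $C^i\alpha^i P^i$ in the cost numerator, i.e. the expected number of billable conversions per winning impression.

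The crux — and the reason a naive substitution fails — is the order of operations with respect to $\bar{P^i}$. If one treats the prediction as instantaneously tracking the chosen $\alpha^i$ and fully substitutes it into the cost, the reporting rate cancels, since $C^i\alpha^i P^i\cdot e^k/(\bar{C^i}\bar{P^i})=e^k$, and payoff depends only on the product $\alpha^i b^i$, yielding mere indifference. Instead I would differentiate with the prediction held fixed at the rate the platform has learned, so that the allocation and the per-conversion price are locked in while only the billable-conversion count responds to a marginal change in actual reporting. Then $\alpha^i$ survives solely in the cost term $-C^i\alpha^i P^i\, e^k/(\bar{C^i}\bar{P^i})$, whose derivative $\partial\pi^i/\partial\alpha^i$ is proportional to $-C^i P^i\, e^k/(\bar{C^i}\bar{P^i})<0$ whenever $e^k>0$. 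Since the winning probability $Pr(\bar{C^i}\bar{P^i}b^i>e^k)$ does not move, $\pi^i$ strictly decreases in $\alpha^i$, establishing the statement's claim that the advertiser always profits from reporting fewer conversions.

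Next I would convert this monotone incentive into non-existence. In any candidate equilibrium, consistency of the platform's learning forces the predicted CVR to equal the sustained report, $\bar{P^i}=\alpha^i P^i$; but the step above shows that, taking that prediction as given, the best response is to report strictly less, so no profile with $\alpha^i>0$ survives. The only remaining candidate $\alpha^i=0$ forces $\bar{P^i}=0$, hence $e^i=\bar{c^i}\bar{p^i}b^i=0$, a zero winning probability and zero payoff; yet this too fails, because once consistency is re-imposed an advertiser that actually sustains a positive report is predicted to have positive CVR and can win at strictly positive payoff. Every positive reporting level thus invites a downward deviation while zero reporting is self-defeating, so no fixed point exists.

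I expect the main obstacle to be conceptual rather than computational: pinning down the equilibrium notion precisely enough to separate the two roles of $\alpha^i$ legitimately — namely, that the prediction reflects the reporting rate the advertiser is expected to sustain (the training sample), whereas the charged conversions reflect the rate it actually chooses ex post. The sign check on the derivative is routine; the care lies in justifying that the allocation is evaluated at the predicted CVR while the payment responds to the realized reporting, which is exactly the biased-sampling channel the model is designed to capture.
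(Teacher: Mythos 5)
Your proposal is correct and follows essentially the same route as the paper: fix the platform's learned prediction at the candidate-equilibrium value $\bar{P^i}=\alpha^{i*}P^i$, show that lowering the actual reporting rate $\alpha^i$ is a strictly profitable deviation because only the billable-conversion count responds, and rule out $\alpha^{i*}=0$ separately since it yields zero winning probability. The only cosmetic difference is that the paper first pins down the optimal bid $b^i=m^i/\alpha^i$ and deviates jointly in $(b^i,\alpha^i)$ (so the winning probability also rises), whereas you hold $b^i$ fixed and use only the cost reduction; both yield the required contradiction.
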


\begin{proof}
We use proof by contradiction. If an equilibrium exists, in which the reporting strategy of advertiser $i$ is $\alpha^{i*}$. In this equilibrium, we have $\bar{p^i} = \alpha^{i*}p^i$ and $\bar{P^i} = \alpha^{i*}P^i$.


If $\alpha^{i*} = 0$, then $\bar{p^i} = 0$ and $e^i = 0$. In this case, advertiser $i$ always loses the auction and attains a payoff of $0$. Consequently, advertiser $i$ can deviate to a positive $\alpha^{i*}$ to achieve a positive payoff.

If $\alpha^{i*} > 0$, we can deduce that for any $\alpha^i > 0$, the optimal $b_i = \frac{m^i}{\alpha^i}$, i.e., $\pi^i(\frac{m^i}{\alpha^i}, \alpha^i | e^k) \ge \pi^i(b^i, \alpha^i | e^k)$ for any $e^k$. The full proof is provided in Appendix \ref{app:i_CPA}. When $b_i = \frac{m^i}{\alpha^i}$, we have:

\begin{align*}
&\pi^i(\alpha^i|e^k)
\\&= 
Pr(\bar{C^i}\bar{P^i}\frac{m^i}{\alpha^i}>e^k)E(C^iP^im^i-C^i\alpha^iP^i \frac{e^k}{\bar{C^i}\bar{P^i}}|\bar{C^i}\bar{P^i}\frac{m^i}{\alpha^i}>e^k)
\\&=
Pr(\bar{C^i}\alpha^{i*}P^i\frac{m^i}{\alpha^i}>e^k)E(C^iP^im^i-\frac{\alpha^i}{\alpha^{i*}}e^k|\bar{C^i}\alpha^{i*}P^i\frac{m^i}{\alpha^i}>e^k)
\end{align*}
Note that $\pi_i$ increases as $\alpha^i$ decreases. By deviating to $\alpha^i = \epsilon\alpha^{i*}$, where $0 < \epsilon < 1$, advertiser $i$ can achieve a higher payoff, which leads to a contradiction.
\end{proof}

When advertisers adopt a fixed reporting strategy, regardless of the value of $\alpha^i$ (0.1 or even 0), the platform can compensate for this effect on payment by adjusting their CVR prediction to align with the reporting strategy, such that $\bar{p^i}=\alpha^ip^i$.


However, the consequence of CPA in the out-site scenario is highly detrimental for platforms. Advertisers will continually lower their reporting strategy to evade payment, causing the platform's revenue to approach zero. \kc{The platform cannot immediately detect or respond to the lowered reporting probability because the CVR prediction algorithm learning takes time.}


Furthermore, the auction loses its effectiveness in selecting the highest value ad, as it devolves into a competition of who can submit the largest number, considering that $b_i=\frac{m^i}{\alpha^i}$. \kc{Consequently, both advertisers may have an equal 50\% chance of winning under the CPA model in the out-site scenario.} So we have:


Advertiser $i$'s payoff, $\pi^i_{O-CPA}=\frac{1}{2}E(C^i
P^im^i)$.

Platform's payoff, $\pi^P_{O-CPA}=0$.

Social welfare, $\pi^S_{O-CPA}=\frac{1}{2}(E(C^1P^1m^1)+E(C^2P^2m^2))$.

\begin{figure*}[!t]
\centering
\includegraphics[width=0.9\textwidth]{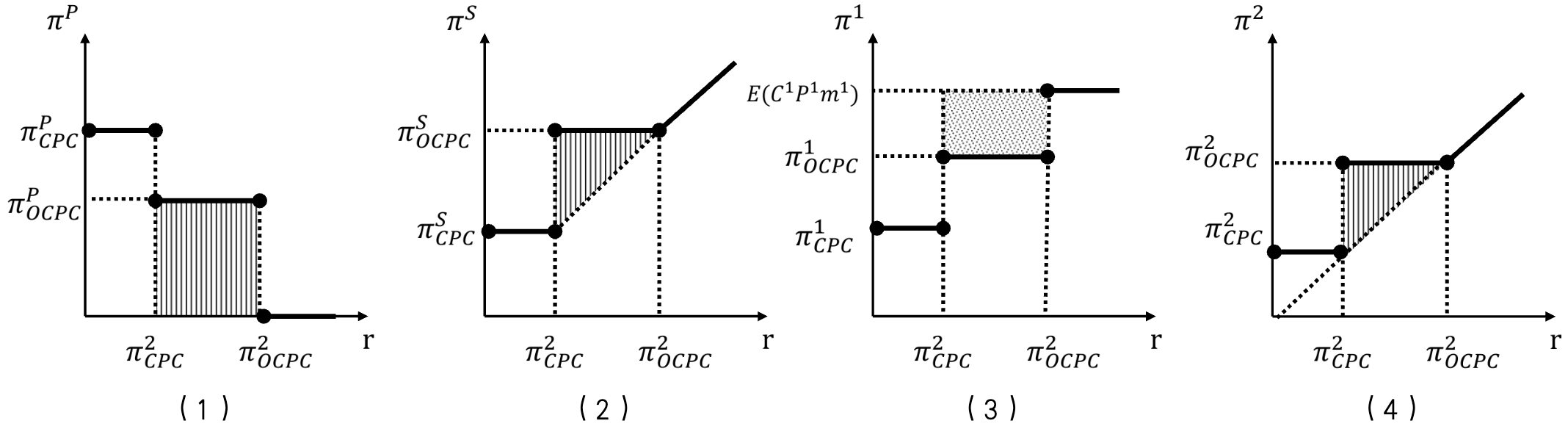}
\caption{Equilibrium Payoffs When the Outside Option $r$ is Varying}\label{fig:outside}
\end{figure*}

\subsection{Optimized Cost Per Click}
Recall that in OCPC, advertisers bid for conversions but pay for clicks. The auction winner, represented by $w=argmax_i \bar{c^i}\bar{p^i}b^i$, receives the impression and has to pay $\frac{\bar{c^l}\bar{p^l}b^l}{\bar{c^w}}$ if a click occurs.

Advertiser $i$ will also select a reporting strategy $\alpha^i$. We aim to investigate whether this reporting strategy also results in a dysfunctional OCPC, similar to the issues observed in CPA.


Given another advertiser $k$'s equivalent bid on an impression $e^k$,
advertiser $i$'s utility is:
\[\pi^i(b^i,\alpha^i|e^k)= Pr(\bar{C^i}\bar{P^i}b^i>e^k)E(C^iP^im^i-C^i \frac{e^k}{\bar{C^i}}|\bar{C^i}\bar{P^i}b^i>e^k)\]
when advertiser $i$ bid $b^i$ per conversion in OCPC.

\begin{proposition} \label{pro:i_OCPC}
Under OCPC in the out-site scenario, a set of dominant strategies for advertiser $i$ is to bid $\alpha^ib^i=m^i$.
\end{proposition}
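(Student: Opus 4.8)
The plan is to reduce advertiser $i$'s optimization to the same truthful-bidding argument used for the CPC proposition, after first showing that the reporting strategy $\alpha^i$ drops out of the cost entirely. First I would substitute the out-site beliefs $\bar{C^i}=C^i$ and $\bar{P^i}=\alpha^iP^i$ into the stated utility. The cost term $C^i\frac{e^k}{\bar{C^i}}$ collapses to $e^k$, because the platform predicts CTR accurately ($\bar{c^i}=c^i$), so the expected per-click charge $\frac{e^k}{\bar{c^i}}$ multiplied by the click rate $c^i$ equals $e^k$. Crucially---and this is the entire point of OCPC versus CPA---this charge is levied \emph{per click}, which the platform observes directly, so $\alpha^i$ never enters the payment. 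This is exactly where OCPC diverges from the CPA computation, in which the per-conversion charge was scaled by the reported-conversion rate and a surviving $\frac{\alpha^i}{\alpha^{i*}}$ factor created the incentive to under-report.

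With the cost pinned at $e^k$, the utility becomes $\Pr(C^i\alpha^iP^ib^i>e^k)\,E(C^iP^im^i-e^k\mid C^i\alpha^iP^ib^i>e^k)$, which depends on the pair $(b^i,\alpha^i)$ \emph{only} through the product $\gamma:=\alpha^ib^i$ appearing in the winning condition. Next I would treat $\gamma$ as the single decision variable and invoke the standard second-price logic: advertiser $i$'s realized surplus conditional on winning is $C^iP^im^i-e^k$, so it wishes to win precisely on the event $\{C^iP^im^i>e^k\}$. Setting $\gamma=m^i$ turns the winning condition into $C^iP^im^i>e^k$, which coincides exactly with the positive-surplus region. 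I would then reproduce the two-case comparison from the CPC proof: any $\gamma>m^i$ adds winning events on which $C^iP^im^i<e^k$ (strictly negative surplus), while any $\gamma<m^i$ forfeits winning events on which $C^iP^im^i>e^k$ (strictly positive surplus); both strictly lower the payoff for every $e^k$. This establishes that $\alpha^ib^i=m^i$ is dominant, and since only the product is pinned down, the whole locus $\{(b^i,\alpha^i):\alpha^ib^i=m^i\}$ is a set of dominant strategies, which is why the statement asserts a set rather than a unique strategy.

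The one genuinely delicate step is justifying that $\alpha^i$ is payoff-irrelevant through the payment channel even though it still shifts the winning condition via the prediction $\bar{p^i}=\alpha^ip^i$. The resolution is that any change in reporting can be exactly offset by rescaling $b^i$, so the effective bid $\bar{c^i}\bar{p^i}b^i=c^i\alpha^ip^ib^i$ depends on reporting and bid only jointly through $\alpha^ib^i$. This is precisely what rescues OCPC from the non-existence result of Proposition~\ref{pro:i_CPA}: because the per-click charge does not move with $\alpha^i$, the advertiser has no lever to depress its payment without simultaneously depressing its effective bid by the same factor, removing the runaway deviation that broke CPA.
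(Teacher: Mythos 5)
Your proposal is correct and follows essentially the same route as the paper: you first observe that $\alpha^i$ and $b^i$ enter the utility only through the product $\alpha^i b^i$ (because $\alpha^i$ affects the effective bid $\bar{C^i}\bar{P^i}b^i=C^iP^i\alpha^ib^i$ but not the per-click payment), and then run the standard second-price two-case comparison showing that any $\alpha^ib^i>m^i$ adds negative-surplus winning events while any $\alpha^ib^i<m^i$ forfeits positive-surplus ones. This matches the paper's proof in the main text and Appendix~\ref{app:i_OCPC}, with your simplification of the cost term $C^i\frac{e^k}{\bar{C^i}}$ to $e^k$ being a harmless cosmetic use of the assumption $\bar{C^i}=C^i$.
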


\begin{proof}
We first show that $\alpha^i$ and $b^i$ always appear together. So we can represent $\pi^i(b^i,\alpha^i|e^k)$ by $\pi^i(\alpha^ib^i|e^k)$.
\begin{align*}
&\pi^i(b^i,\alpha^i|e^k)= Pr(\bar{C^i}\bar{P^i}b^i>e^k)E(C^iP^im^i-C^i \frac{e^k}{\bar{C^i}}|\bar{C^i}\bar{P^i}b^i>e^k)
\\&=
Pr(\bar{C^i}P^i\alpha^ib^i>e^k)E(C^iP^im^i-C^i \frac{e^k}{\bar{C^i}}|\bar{C^i}P^i\alpha^ib^i>e^k)
\\&=\pi^i(\alpha^ib^i|e^k)
\end{align*}
We can then deduce that $\pi^i(m^i|e^k)\ge\pi^i(\alpha^ib^i|e^k)$ for any $e^k$. The full proof can be found in Appendix \ref{app:i_OCPC}.
\end{proof}

Proposition \ref{pro:i_OCPC} suggests that the possible bidding and reporting strategy combinations include $\alpha^i=0.2,b^i=5m^i$ and $\alpha^i=0.5,b^i=2m^i$, among others. The truth-telling strategy, $\alpha^i=1,b^i=m^i$, is also part of their dominant strategies. Consequently, we have


\begin{corollary} \label{cor:i_OCPC}
Truth-telling, i.e., bidding $b^i=m^i$ and $\alpha^i=1$, is a dominant strategy under OCPC in the out-site scenario.
\end{corollary}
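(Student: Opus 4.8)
The plan is to treat this as an immediate specialization of Proposition~\ref{pro:i_OCPC}, since that proposition already characterizes the \emph{entire} family of dominant strategies rather than a single one. First I would recall precisely what Proposition~\ref{pro:i_OCPC} delivers: the reporting probability $\alpha^i$ and the conversion bid $b^i$ enter the utility $\pi^i(b^i,\alpha^i|e^k)$ only through their product $\alpha^i b^i$, so the payoff collapses to a function of $\alpha^i b^i$ alone, and any pair $(b^i,\alpha^i)$ satisfying $\alpha^i b^i = m^i$ maximizes $\pi^i(\cdot\,|e^k)$ pointwise in $e^k$. That pointwise property is exactly the definition of a dominant strategy, and crucially it holds for the whole level set $\{(b^i,\alpha^i): \alpha^i b^i = m^i\}$.

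The key step is then simply to exhibit truth-telling as a member of this level set. I would verify that the truth-telling pair $\alpha^i = 1$ and $b^i = m^i$ is feasible, namely $\alpha^i = 1 \in [0,1]$ is an admissible reporting strategy and $b^i = m^i$ is an admissible bid, and that it satisfies the characterizing equation $\alpha^i b^i = 1 \cdot m^i = m^i$. Hence the truth-telling pair lies in the dominant-strategy set identified by Proposition~\ref{pro:i_OCPC}, so it inherits the dominance property and the corollary follows.

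I expect no genuine technical obstacle here: all the analytic work (showing $\alpha^i$ and $b^i$ enter only via their product, and that $\alpha^i b^i = m^i$ is payoff-maximizing for every $e^k$) is discharged by Proposition~\ref{pro:i_OCPC} and its appendix proof. The only content of the corollary is conceptual selection: among the continuum of equally optimal $(b^i,\alpha^i)$ combinations, the \emph{focal} one with $\alpha^i = 1$ corresponds to honest conversion reporting, and the point worth emphasizing in the surrounding discussion is that the advertiser suffers no payoff loss by not manipulating, in sharp contrast to the CPA case of Proposition~\ref{pro:i_CPA} where shrinking $\alpha^i$ was strictly profitable. Thus the substance is to observe that OCPC removes the manipulation incentive entirely, making the proof itself a one-line instantiation.
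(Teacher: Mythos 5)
Your proposal is correct and matches the paper's own reasoning exactly: the paper likewise treats the corollary as an immediate instantiation of Proposition~\ref{pro:i_OCPC}, noting that the pair $\alpha^i=1$, $b^i=m^i$ satisfies $\alpha^i b^i = m^i$ and hence belongs to the already-established set of dominant strategies. No gap.
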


Corollary \ref{cor:i_OCPC} implies that the game outcome of OCPC in the out-site scenario is identical to the in-site scenario, as the optimal bidding strategy for advertisers is always to bid their marginal gain, $m^i$. Although advertisers can manipulate conversion reporting in the out-site scenario, our proofs reveal that their optimal strategies consistently correspond to no manipulation. Hence, the payoffs for each party are as follows:


Advertiser $i$'s payoff, $\pi^i_{O-OCPC}$: 
\begin{align*}
E(C^i
P^im^i-C^kP^km^k|C^i
P^im^i>C^kP^km^k)Pr(C^i
P^im^i>C^kP^km^k)
\end{align*}

Platform's payoff, $\pi^P_{O-OCPC}=E(min(C^1P^1m^1,C^2P^2m^2))$.

Social welfare, $\pi^S_{O-OCPC}=E(max(C^1P^1m^1,C^2P^2m^2))$.

\section{Analysis} \label{sec:analysis}

\subsection{Comparison between OCPC and CPA}
The main advantage of OCPC over CPA is its applicability to the out-site scenarios. In such scenarios, the platform's payoff under CPA, denoted as $\pi^P_{O-CPA}$, is zero due to conversion reporting manipulation. However, in OCPC, even though advertisers can manipulate conversion rates, our analysis indicates that it still yields positive platform payoffs, represented as $\pi^P_{O-OCPC} = E(min(C^1P^1m^1, C^2P^2m^2)) = \pi^P_{I-OCPC}>0$. This outcome is consistent with the in-site scenario as well. Furthermore, the social welfare in OCPC surpasses that in CPA, as evidenced by $\pi^S_{O-OCPC}>\pi^S_{O-CPA}$, since OCPC can effectively screen out the ad with the highest value in out-site scenarios, which CPA is unable to do.


Additionally, even in the in-site scenario, OCPC achieves identical results as CPA: both the platform's payoff, $\pi^p_{I-OCPC} =\pi^p_{I-CPA}$, the advertisers' payoffs, $\pi^i_{I-OCPC} =\pi^i_{I-CPA}$, and the social welfare, $\pi^S_{I-OCPC} =\pi^S_{I-CPA}$ are the same.


Our analysis explains why OCPC prevails over CPA in practice, given its broader applicability to out-site scenarios while maintaining the same performance in in-site scenarios as CPA. Consequently, OCPC serves as an ideal substitute for CPA.


\subsection{Comparison between OCPC and CPC}
First, both OCPC and CPC exhibit the desirable property that the outcomes of the out-site scenario align with those of the in-site scenario. This property ensures their broad applicability across various situations.

We proceed to compare the payoffs for each party between OCPC and CPC. We have the following lemma:

\begin{lemma} \label{lemma:social}
The social welfare in OCPC is greater than in CPC: 
\begin{align*} 
    \pi^S_{OCPC} =E(max(C^1P^1m^1, C^2P^2m^2))\\>
    \pi^S_{CPC} =E(max(C^1\mu^p_1m^1, C^2\mu^p_2m^2))
\end{align*}
\end{lemma}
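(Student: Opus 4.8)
The plan is to reduce the claim to a single clean probabilistic fact: for any two (possibly dependent) random variables $X$ and $Y$, one has $E(\max(X,Y)) \ge \max(E(X), E(Y))$, which holds because $\max(X,Y) \ge X$ and $\max(X,Y) \ge Y$ pointwise, so taking expectations gives $E(\max(X,Y)) \ge E(X)$ and $\ge E(Y)$ simultaneously. The two welfare expressions differ only in that OCPC keeps the random conversion rates $P^1, P^2$ inside the $\max$, whereas CPC replaces them by their means $\mu^p_1 = E(P^1)$ and $\mu^p_2 = E(P^2)$. The strategy is therefore to condition on the realized click-through rates, apply the inequality above to the conversion-rate randomness, and then integrate back out.

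First I would write $\pi^S_{OCPC} = E_{C^1,C^2}\big(E_{P^1,P^2}(\max(C^1 P^1 m^1, C^2 P^2 m^2) \mid C^1, C^2)\big)$, using that the CVR draws are independent of the CTR draws. Fixing $C^1 = c^1$ and $C^2 = c^2$ and setting $X = c^1 P^1 m^1$, $Y = c^2 P^2 m^2$, the inequality yields $E_P(\max(X,Y)) \ge \max(E_P(X), E_P(Y)) = \max(c^1 \mu^p_1 m^1, c^2 \mu^p_2 m^2)$, since $E(P^i) = \mu^p_i$. Note this step needs no independence between $P^1$ and $P^2$. Taking expectation over $C^1, C^2$ then gives $\pi^S_{OCPC} \ge \pi^S_{CPC}$, establishing the weak inequality with almost no computation.

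The main obstacle is upgrading this to the strict inequality claimed in the lemma. Equality in $E(\max(X,Y)) = \max(E(X),E(Y))$ forces a degenerate ordering: if $E(X) \ge E(Y)$, then equality requires $Y \le X$ almost surely, because $E(\max(X,Y)) - E(X) = E((Y-X)^+)$ vanishes only when $(Y-X)^+ = 0$ a.s. So I would invoke the standing assumption that the CVR distributions are nondegenerate ($\sigma^p_i > 0$) with sufficient overlap, and argue that on a positive-measure set of $(c^1,c^2)$ there is positive probability that $\mathrm{argmax}\{c^1 P^1 m^1, c^2 P^2 m^2\}$ differs from $\mathrm{argmax}\{c^1 \mu^p_1 m^1, c^2 \mu^p_2 m^2\}$, making the conditional inequality strict there. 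Integrating preserves strictness and yields $\pi^S_{OCPC} > \pi^S_{CPC}$. I expect the delicate part to be phrasing the nondegeneracy-and-overlap condition precisely enough to guarantee that the order can switch with positive probability, rather than the Jensen-type step itself, which is routine.
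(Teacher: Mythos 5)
Your proof is correct, but it takes a genuinely different route from the paper's. The paper (Appendix \ref{app:lemma_social}) decomposes each social-welfare expression by the identity of the realized auction winner and observes that $\pi^S_{OCPC}-\pi^S_{CPC}$ collapses to the event where the OCPC argmax $i$ and the CPC argmax $j$ disagree, on which $C^iP^im^i-C^jP^jm^j>0$ by definition of the OCPC argmax; this makes the economic source of the gain explicit (OCPC picks the right winner exactly when CPC does not). You instead condition on $(C^1,C^2)$ and apply the elementary inequality $E(\max(X,Y))\ge\max(E(X),E(Y))$ to the CVR randomness, then integrate back out. Your route is more elementary, extends verbatim to $N$ advertisers, needs no independence between $P^1$ and $P^2$, and --- unlike the paper --- isolates precisely when equality holds. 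That last point is worth keeping: both the lemma's strict inequality and the paper's final ``$>0$'' silently assume that the two argmaxes disagree with positive probability, which fails for instance if the $P^i$ are degenerate ($\sigma^p_i=0$) or if one advertiser dominates almost surely; your explicit nondegeneracy-and-overlap condition is exactly what is needed to close that gap, and is the only delicate part of either argument. (Both routes also rely on the paper's implicit assumption that the $P^i$ are independent of the $C^i$, which you correctly flag and the paper uses elsewhere without comment.)
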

The reason that OCPC yields higher social welfare than CPC lies in its increased efficiency with respect to ad delivery. In OCPC, the platform is responsible for predicting CVR and can employ advanced machine learning techniques to accurately determine the CVR and identify the most valuable ad. In contrast, CPC requires advertisers to estimate their CVR through crude averaging methods. However, we also have:

\begin{lemma} \label{lemma:plat}
The platform's expected payoff in OCPC is less than in CPC: 
\vspace{-2mm}
\begin{align*} 
    \pi^P_{OCPC} =E(min(C^1P^1m^1, C^2P^2m^2))\\<
    \pi^P_{CPC} =E(min(C^1\mu^p_1m^1, C^2\mu^p_2m^2))
\end{align*}
\end{lemma}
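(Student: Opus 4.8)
The plan is to read both payoffs as expectations of the single map $(x,y)\mapsto\min(C^1m^1x,\,C^2m^2y)$, with the only difference being that OCPC averages this map over the random conversion rates $(P^1,P^2)$, whereas CPC first collapses each $P^i$ to its mean $\mu^p_i$ and only then evaluates. Because $\min$ is concave, this is precisely the configuration of Jensen's inequality: averaging a concave function lands below evaluating it at the mean. Note that the comparison is legitimate because the CTR distributions $C^1,C^2$ are identical in the two models (the platform tracks CTR accurately, and for both CPC and OCPC the out-site subgame coincides with the in-site one), so only the treatment of CVR differs.

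Concretely, first I would condition on the realized click-through rates $(C^1,C^2)=(c^1,c^2)$. Using that the CVR draws are independent of the CTR draws, so that $E(P^i\mid C^1,C^2)=\mu^p_i$, the inner CPC quantity $\min(c^1\mu^p_1m^1,\,c^2\mu^p_2m^2)$ is exactly $h(\mu^p_1,\mu^p_2)$ for the function $h(x,y)=\min(c^1m^1x,\,c^2m^2y)$, while the inner OCPC quantity is $E\big(h(P^1,P^2)\big)$. The two payoffs then differ only by where the CVR-average is taken relative to $h$.

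Second, I would observe that $h$ is concave, being the pointwise minimum of the two affine functions $(x,y)\mapsto c^1m^1x$ and $(x,y)\mapsto c^2m^2y$. Multivariate Jensen then gives $E\big(h(P^1,P^2)\big)\le h(\mu^p_1,\mu^p_2)$ for every fixed $(c^1,c^2)$, and taking the outer expectation over $(C^1,C^2)$ yields $\pi^P_{OCPC}\le\pi^P_{CPC}$ immediately.

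The main obstacle is upgrading this to the \emph{strict} inequality asserted. Jensen is an equality exactly when the random vector sits almost surely on a single affine piece of $h$, i.e.\ entirely on one side of the kink line $c^1m^1x=c^2m^2y$. So the strict claim needs a non-degeneracy argument: provided the CVR distributions have positive variance and their scaled supports overlap, there is a positive-probability set of realized $(c^1,c^2)$ for which both the event $c^1m^1P^1>c^2m^2P^2$ and its complement carry positive conditional mass, making Jensen strict on a set of $(c^1,c^2)$ of positive measure; integrating, the strict gap survives. Identifying the minimal non-degeneracy hypothesis and checking its consistency with the paper's standing assumptions is the delicate part, whereas the concavity-plus-Jensen skeleton is routine.
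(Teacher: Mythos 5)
Your proof is correct, but it takes a genuinely different route from the paper's. The paper does not invoke concavity or Jensen at all: it decomposes both payoffs according to \emph{which advertiser attains the minimum} under each criterion, i.e.\ it writes $\pi^P_{OCPC}=\sum_i E\big(C^iP^im^i\,\big|\,i=\operatorname{argmin}_k C^kP^km^k\big)\Pr(\cdot)$ and likewise for CPC (using independence of $P^j$ from the CPC selection event to replace $\mu^p_j$ by $P^j$ inside the expectation), cancels the terms where the two selections agree, and observes that on the disagreement event the OCPC-selected value $C^iP^im^i$ is by definition the true minimum and hence no larger than the CPC-selected value $C^jP^jm^j$. That is a coupling/case-decomposition argument which pairs symmetrically with the paper's proof of the social-welfare lemma (the $\max$ version), and it makes explicit \emph{where} the loss comes from: the realizations on which CPC's cruder ranking picks a different (and truly more valuable) advertiser as the price-setter. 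Your Jensen argument --- condition on $(C^1,C^2)$, note that $h(x,y)=\min(c^1m^1x,c^2m^2y)$ is concave as a minimum of affine maps, and apply $E\,h(P^1,P^2)\le h(\mu^p_1,\mu^p_2)$ --- is shorter, requires no independence between $P^1$ and $P^2$, and generalizes immediately to $N$ advertisers and to any concave aggregator, at the cost of obscuring the argmin-switching interpretation. Both arguments deliver only a weak inequality without a non-degeneracy assumption; your explicit remark that strictness requires the kink event $c^1m^1P^1 \gtrless c^2m^2P^2$ to carry mass on both sides for a positive-measure set of CTR realizations is exactly the same gap the paper's proof leaves implicit (its disagreement event could in principle have probability zero), so you are if anything more careful on this point than the paper itself.
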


The intuition behind Lemmas \ref{lemma:social} and \ref{lemma:plat} can be illustrated by considering the example of rolling two dice. The expected value of the maximum of these two dice (approximately $4.47$) should be greater than the maximum of the expected value of two dice ($\max{3.5, 3.5} = 3.5$). Conversely, the expected value of the minimum of these two dice (around $2.53$) should be less than the minimum of the expected value of two dice ($3.5$).


At first glance, Lemma \ref{lemma:plat} may suggest that platforms should opt for CPC over OCPC, as CPC yields a higher platform payoff. However, this is not universally true. The current comparison between OCPC and CPC is based on the subgames in which advertisers have already decided to participate in the auction. Advertiser 2 possesses an outside option, $r$, and will choose not to enter the auction if the expected payoff is less than $r$. Furthermore, we have:


\begin{lemma} \label{lemma:adv}
The advertisers' expected payoff in OCPC is greater than in CPC: 
\begin{align*} 
    &\pi^i_{OCPC} =\\&E(C^i
P^im^i-C^kP^km^k|C^i
P^im^i>C^kP^km^k)Pr(C^i
P^im^i>C^kP^km^k)\\>
    &\pi^i_{CPC} =\\&E(C^i
\kc{P^im^i-C^kP^km^k}|C^i
\mu^p_im^i>C^k
\mu^p_km^k)Pr(C^i
\mu^p_im^i>C^k
\mu^p_km^k)
\end{align*}
\end{lemma}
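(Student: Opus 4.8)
The plan is to express both payoffs as the expectation of one common random quantity integrated against two different \emph{winning indicators}, and then to compare the indicators pointwise. Writing $X = C^i P^i m^i$ and $Y = C^k P^k m^k$ for the two advertisers' realized values, the identity $E(Z \mid A)\,Pr(A) = E\!\big(Z\,\mathbf{1}\{A\}\big)$ turns the OCPC payoff into $\pi^i_{OCPC} = E\!\big((X-Y)\,\mathbf{1}\{X>Y\}\big)$ and the CPC payoff into $\pi^i_{CPC} = E\!\big((X-Y)\,\mathbf{1}\{B\}\big)$, where $B=\{C^i\mu^p_i m^i > C^k\mu^p_k m^k\}$ is the CPC winning event. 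First I would justify that the CPC integrand may be taken to be this same $X-Y$: the event $B$ depends only on $(C^i,C^k)$ and the constants $\mu^p_i,\mu^p_k,m^i,m^k$, so each of $P^i$ and $P^k$ is independent of $B$; hence $E(C^iP^im^i\,\mathbf{1}\{B\})=E(C^i\mu^p_i m^i\,\mathbf{1}\{B\})$ and $E(C^kP^km^k\,\mathbf{1}\{B\})=E(C^k\mu^p_k m^k\,\mathbf{1}\{B\})$. This is the same independence already invoked when the CPC payoff was derived, now applied to both the revenue and the cost term.

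The core of the argument is then immediate. Since the OCPC winning event is exactly the set where the common integrand is positive, $\pi^i_{OCPC}=E\!\big((X-Y)^+\big)$. For \emph{any} event, and in particular for $B$, the pointwise bound $(X-Y)\,\mathbf{1}\{B\}\le (X-Y)^+$ holds: on $\{X>Y\}$ the right-hand side equals $X-Y$ and dominates, while on $\{X\le Y\}$ the right-hand side is $0$ and still dominates because $(X-Y)\,\mathbf{1}\{B\}\le 0$ there. Taking expectations gives $\pi^i_{CPC}\le\pi^i_{OCPC}$ with no further computation. Conceptually, OCPC allocates the impression according to the true value ordering of $X$ and $Y$, so it realizes $E\big((X-Y)^+\big)$, whereas CPC uses the noisier $\mu^p$-based proxy and can only do weakly worse.

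The hard part will be the \emph{strict} inequality. I would compute the gap directly as $\pi^i_{OCPC}-\pi^i_{CPC}=E\!\big((X-Y)(\mathbf{1}\{X>Y\}-\mathbf{1}\{B\})\big)$ and check that the integrand vanishes whenever the two winning events agree and equals $|X-Y|$ whenever they disagree, so that the gap simplifies to $E\!\big(|X-Y|\,\mathbf{1}\{D\}\big)$, where $D$ is the event that the $\mu^p$-based CPC allocation and the $P$-based OCPC allocation pick different winners. Strictness is therefore equivalent to $Pr\big(D\cap\{X\ne Y\}\big)>0$, i.e.\ to the two allocation rules disagreeing with positive probability on a set where the values are not tied. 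This is where the nondegeneracy of the conversion-rate distributions is essential: if $\sigma^p_i>0$ or $\sigma^p_k>0$ and the relevant distributions have overlapping support (as in the paper's uniform examples), then with positive probability the realized $P^i,P^k$ fluctuate enough around their means $\mu^p_i,\mu^p_k$ to flip the mean-based ordering, forcing $Pr(D)>0$. I would make this precise by exhibiting a positive-probability region of $(C^i,C^k)$ on which $C^i\mu^p_i m^i$ and $C^k\mu^p_k m^k$ are close enough that a CVR realization reverses their order; establishing this positive-probability disagreement cleanly is the only genuinely delicate step, and it matches the ``two dice'' intuition invoked for Lemmas~\ref{lemma:social} and~\ref{lemma:plat}.
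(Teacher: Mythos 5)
Your argument is essentially the paper's own: the paper also computes $\pi^i_{OCPC}-\pi^i_{CPC}$ by observing that the cases where the two winning rules agree contribute zero, and that the two disagreement cases each contribute $E\big(|C^iP^im^i-C^kP^km^k|\big)$ restricted to the corresponding event, which is exactly your $E\big(|X-Y|\,\mathbf{1}\{D\}\big)$ written as a case analysis rather than with indicators. The one place you go beyond the paper is in noting that strictness requires the disagreement event to have positive probability (needing nondegenerate, overlapping CVR distributions); the paper's proof simply asserts that ``both terms are positive'' without this justification, so your flagged ``delicate step'' is a real gap in the original that your sketch would close.
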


Lemma \ref{lemma:adv} is based on the intuition that social welfare is the sum of platform payoff and advertiser payoff. CPC results in a lower social welfare and a higher platform payoff, which consequently leads to a lower advertiser payoff. \kc{The formal proofs of the three lemmas are given in Appendix \ref{app:lemma_social}, \ref{app:lemma_plat}, and \ref{app:lemma_adv}.}


Lemma \ref{lemma:adv} suggests that if $\pi^2_{OCPC}>r>\pi^2_{CPC}$, advertiser 2 will enter the OCPC auction but not the CPC auction because its payoff in CPC is lower than its outside option. Consequently, advertiser 2's absence from the auction would lead to a decrease in platform payoffs under the CPC model.


Our analysis illuminates the trade-offs between OCPC and CPC models. Platforms can achieve higher payoffs by adopting OCPC, as it can draw a larger number of advertisers by providing them with increased payoffs. However, if advertisers have less competitive alternative options and consistently opt for the focal platform, the platform may be able to attain higher payoffs using the CPC model.


\subsection{Optimal Pricing Model for Platforms}
We then generalize our findings for the optimal pricing model selection among a set of pricing models, by solving the second and first stages of our game.

In the second stage, given that the platform has selected pricing model $X$, advertiser 2 will participate in the auction if and only if its payoffs from the auction exceed its outside option $r$: 
\[\pi^2_{X}>r\]

Conversely, advertiser 1 consistently enters the auction as it does not have an outside option. This necessitates that the platform opts for a pricing model ensuring advertiser 2's participation in the auction. If only advertiser 1 is present, the platform will invariably receive zero payoffs due to the second-price auction.



As a result, in the first stage, the platform selects a pricing model that guarantees advertiser 2's involvement in the auction first and then maximizes its payoffs. The optimal pricing model is
\begin{align*}
    X^*=argmax_X \pi^P_X \\
    s.t.\ \pi^2_X>r
\end{align*}

This equation indicates that if a platform impulsively shifts to a pricing model featuring higher platform payoffs but lower advertiser payoffs, it might mistakenly perceive an increase in payoffs through A/B testing. However, its payoffs will ultimately decline as some advertisers with budget constraints opt to leave the platform.


Our analysis suggests that platforms maintain a balance between platform payoffs with a predetermined number of advertisers and advertiser payoffs that draw in more advertisers. When user value is substantial, advertisers will naturally join the platform, enabling pricing models with high platform payoffs. Conversely, platforms should adopt pricing models with high advertiser payoffs to enhance their payoffs in an alternative manner.


\subsection{Equilibrium}

We further examine the game's equilibriums under various outside options $r$. Figure~\ref{fig:outside} illustrates each party's payoff under different outside options in the out-site scenario. Specifically, Figure~\ref{fig:outside} (1) to (4) show the platform's payoff, social welfare, advertiser 1's payoff, and advertiser 2's payoff, respectively.



As seen in Figure~\ref{fig:outside}, when $r<\pi^2_{CPC}$, the platform selects CPC. When $\pi^2_{CPC}<r<\pi^2_{OCPC}$, the platform selects OCPC. For $r>\pi^2_{OCPC}$, the platform exhibits indifference between pricing models as the payoffs remain consistently zero.


The shaded region in vertical lines highlights the benefits from OCPC's innovation. Prior to OCPC's development, the platform would incur no payoff when $\pi^2_{CPC}<r<\pi^2_{OCPC}$, as advertiser 2 would abstain from entering the auction. This shaded region depicted in Figure~\ref{fig:outside} (1) signifies the enhancement in platform payoffs due to OCPC, indicating that developing a new pricing model that yields higher advertiser payoffs is advantageous for platforms.



Interestingly, not all parties benefit from OCPC. The dotted region in Figure~\ref{fig:outside} (3) signifies a decrease in advertiser 1's payoff. This outcome occurs because OCPC introduces more competitors for those who consistently participate in auctions. As competition intensifies, bidding becomes more aggressive, leading to higher costs and subsequently reducing payoffs for these advertisers.



\subsection{Managerial Implications}

Firstly, platforms can increase their payoffs by appropriately shifting their focus from CTR and CVR prediction to improving CTR and CTR. Overemphasizing CTR and CVR prediction and ad ranking may lead to diminishing returns because there is an upper bound on the platform's payoff, in which each ad is ranked accurately based on its value. On the other hand, improving CTR and CVR boost payoffs in two ways: enhancing $C^i$ or $P^i$ directly increases both $\pi^P_{CPC}$ and $\pi^P_{OCPC}$ and attracts more advertisers by shifting $\pi^2_{CPC}$ and $\pi^2_{OCPC}$ to the right. To achieve this, platforms can expand their high net worth user base, make ads more contextually relevant to their services, and improve ad quality. The role of algorithms in facilitating these improvements remains a promising direction for future research.


Secondly, platforms can increase their payoffs by indirectly reducing advertisers' outside options. One potential approach involves offering subsidies or coupons to advertisers. For instance, if $r=\pi^2_{OCPC}$, advertiser 2 will not enter the auction, resulting in no payoff for the platform. However, providing a small subsidy, $\epsilon$, to advertiser 2 will result in their participation in the auction, and the platform will receive a payoff of $\pi^P_{OCPC}-\epsilon$. The primary challenge in delivering subsidies lies in identifying deserving advertisers and determining the minimum subsidy amount, given that advertisers are highly strategic in such games.


\begin{figure}[!h]
\centering
\includegraphics[width=0.25\textwidth]{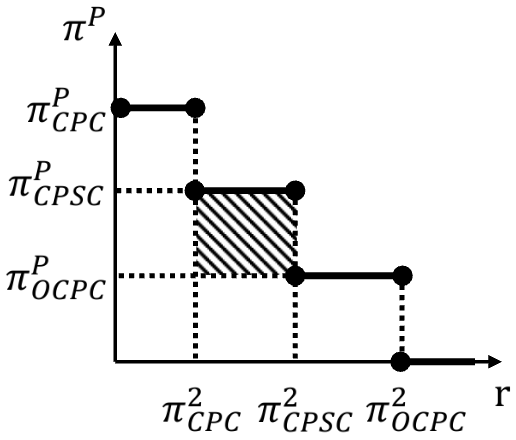}
\caption{Cost Per Shopping Cart (CPSC) Example}\label{fig:CPSC}
\end{figure}

Lastly, platforms can increase their payoffs by developing new pricing models wherein advertisers bid for intermediate outcomes between clicks and conversions. For example, we can develop a pricing model for online retailers called Cost Per Shopping Cart (CPSC), where advertisers bid for instances when a user adds a product to their shopping cart and they pay for clicks or impressions. As illustrated in Figure \ref{fig:CPSC}, CPSC attracts more advertisers than CPC and generates higher platform payoffs than OCPC. The shaded region represents its benefits.

\section{Conclusion}
In conclusion, this paper pioneers the study of the Optimized Cost Per Click (OCPC) pricing model in online advertising from an economic perspective. This paper identifies two key factors contributing to OCPC's prevalence: its versatility in out-site scenarios when advertisers can manipulate conversion reporting and its ability to attract more advertisers with outside options. This paper sheds light on the optimal pricing model selection for online advertising platforms and offers suggestions to further improve their payoffs.


\begin{acks}
This work was inspired during Kaichen Zhang's internship at Bytedance. Thanks to Hongyi Zhang, Dingkun Hong, and the Bytedance Adscore team for their help.
\end{acks}

\newpage

\bibliographystyle{ACM-Reference-Format}
\bibliography{reference}

\raggedbottom
\newpage

\appendix
\section{appendix}
\subsection{Other Pricing Models} \label{app:other}
The definition of CPM and OCPM is included in this section. Their analysis is analogous to CPC and OCPC.
\subsubsection{\textbf{Cost Per Mille}}
In CPM, advertisers bid for impressions and pay for impressions. 

As advertisers bid for impressions, the platform predicts advertiser $i$'s equivalent bid on an impression as $e^i=b^i$.

As advertisers pay for impressions, the auction winner has to pay $b^l$ after the ad display. The loser pays 0.

\subsubsection{\textbf{Optimized Cost Per Mille}}
In OCPM, advertisers bid for conversions but pay for impressions. 

As advertisers bid for conversions, the platform's prediction on advertiser $i$'s equivalent bid on an impression is $e^i=\bar{c^i}\bar{p^i}b^i$. 

As advertisers pay for impressions, the auction winner has to pay $\bar{c^l}\bar{p^l}b^l$ immediately after the ad display. The loser pays 0.

\kc{
\subsection{Assumption Justification} \label{app:assumption}
We justify the assumptions made in this paper:

\begin{enumerate} 
\item Advertiser 1 has no outside option, while advertiser 2 has an outside option $r$: refer to Section \ref{sec:setting}.
\item The platform’s decisions are based on its predictions of $c^i$ and $p^i$: In the real world, ads are ranked by platforms according to CTR/CVR predictions.
\item Advertisers make decisions based on the distribution of $C^i$ and $P^i$ ...: In the real world, advertisers make decisions based on their CTR/CVR, evidenced by various CTR/CVR statics in advertiser panels offered by platforms.
\item ... and their belief in the accuracy of the platform's predictions: Advertisers can't know the real value of the platform's predictions and can only form belief instead.
\item The platform's predictions are accurate, with $\bar{c^i}=c^i$ and $\bar{p^i}=p^i$: Platforms in the real world optimize their predictions to be accurate. This assumption is made for the sake of mathematical convenience.
\item ... and advertisers' beliefs are consistent, with $\bar{C^i}=C^i$ and $\bar{P^i}=P^i$: advertisers should form a Rational Expectation, consistent with the truth. 
\item $\bar{c^i}=c^i$, $\bar{p^i}=\alpha^ip^i$, $\bar{C^i}=C^i$ and $\bar{P^i}=\alpha^iP^i$: Previous assumptions' out-site scenario version.
\end{enumerate}
}

\subsection{Proof of Proposition \ref{pro:CPA}} \label{app:CPA}
We show that $\pi^i(m^i|e^k)\ge\pi^i(b^i|e^k)$ for any $e^k$.

When $b^i>m^i$,
\begin{align*}
&\pi^i(b^i|e^k)
\\=&
Pr(\bar{C^i}\bar{P^i}m^i>e^k)E(C^iP^im^i-C^iP^i \frac{e^k}{\bar{C^i}\bar{P^i}}|\bar{C^i}\bar{P^i}m^i>e^k)
\\
&+Pr(\bar{C^i}\bar{P^i}b^i>e^k>\bar{C^i}\bar{P^i}m^i)\\&E(C^iP^im^i-C^iP^i \frac{e^k}{\bar{C^i}\bar{P^i}}|\bar{C^i}\bar{P^i}b^i>e^k>\bar{C^i}\bar{P^i}m^i)
\\
<=&
Pr(\bar{C^i}\bar{P^i}m^i>e^k)E(C^iP^im^i-C^iP^i \frac{e^k}{\bar{C^i}\bar{P^i}}|\bar{C^i}\bar{P^i}m^i>e^k) = \pi^i(m^i|e^k)
\end{align*}
because $E(C^iP^im^i-C^iP^i \frac{e^k}{\bar{C^i}\bar{P^i}}|\bar{C^i}\bar{P^i}b^i>e^k>\bar{C^i}\bar{P^i}m^i)<0$.

When $b^i<m^i$,
\begin{align*}
&\pi^i(m^i|e^k)
\\=&
Pr(\bar{C^i}\bar{P^i}b^i>e^k)E(C^iP^im^i-C^iP^i \frac{e^k}{\bar{C^i}\bar{P^i}}|\bar{C^i}\bar{P^i}b^i>e^k)
\\
&+Pr(\bar{C^i}\bar{P^i}b^i<e^k<\bar{C^i}\bar{P^i}m^i)\\&E(C^iP^im^i-C^iP^i \frac{e^k}{\bar{C^i}\bar{P^i}}|\bar{C^i}\bar{P^i}b^i<e^k<\bar{C^i}\bar{P^i}m^i)
\\
>=&
Pr(\bar{C^i}\bar{P^i}b^i>e^k)E(C^iP^im^i-C^iP^i \frac{e^k}{\bar{C^i}\bar{P^i}}|\bar{C^i}\bar{P^i}b^i>e^k) \\=& \pi^i(b^i|e^k)
\end{align*}
So $\pi^i(m^i|e^k)>=\pi^i(b^i|e^k)$ for any $e^k$.

\subsection{Proof of Proposition \ref{pro:OCPC}} \label{app:OCPC}
We show that $\pi^i(m^i|e^k)\ge\pi^i(b^i|e^k)$ for any $e^k$.

When $b^i>m^i$,
\begin{align*}
&\pi^i(b^i|e^k)
\\=&
Pr(\bar{C^i}\bar{P^i}m^i>e^k)E(C^iP^im^i-C^i \frac{e^k}{\bar{C^i}}|\bar{C^i}\bar{P^i}m^i>e^k)
\\&
+Pr(\bar{C^i}\bar{P^i}b^i>e^k>\bar{C^i}\bar{P^i}m^i)\\&E(C^iP^im^i-C^i \frac{e^k}{\bar{C^i}}|\bar{C^i}\bar{P^i}b^i>e^k>\bar{C^i}\bar{P^i}m^i)
\\
<=&
Pr(\bar{C^i}\bar{P^i}m^i>e^k)E(C^iP^im^i-C^i \frac{e^k}{\bar{C^i}}|\bar{C^i}\bar{P^i}m^i>e^k) \\=& \pi^i(m^i|e^k)
\end{align*}
\vspace{-3mm}
When $b^i<m^i$,
\vspace{+1mm}
\begin{align*}
\pi^i(m^i|e^k)&
=
Pr(\bar{C^i}\bar{P^i}b^i>e^k)E(C^iP^im^i-C^i \frac{e^k}{\bar{C^i}}|\bar{C^i}\bar{P^i}b^i>e^k)
\\&
+Pr(\bar{C^i}\bar{P^i}b^i<e^k<\bar{C^i}\bar{P^i}m^i)\\&E(C^iP^im^i-C^i \frac{e^k}{\bar{C^i}}|\bar{C^i}\bar{P^i}b^i<e^k<\bar{C^i}\bar{P^i}m^i)
\end{align*}
\begin{align*}
>=&
Pr(\bar{C^i}\bar{P^i}b^i>e^k)E(C^iP^im^i-C^i \frac{e^k}{\bar{C^i}}|\bar{C^i}\bar{P^i}b^i>e^k)  \\=& \pi^i(b^i|e^k)
\end{align*}
So $\pi^i(m^i|e^k)>=\pi^i(b^i|e^k)$ for any $e^k$.

\subsection{Proof of Proposition \ref{pro:i_CPA}} \label{app:i_CPA}
Suppose there exists an equilibrium, where advertiser $i$'s reporting strategy is $\alpha^{i*}$. Then, there is $\bar{p^i}=\alpha^{i*}p^i$ and $\bar{P^i}=\alpha^{i*}P^i$ accordingly in this equilibrium.

If $\alpha^{i*}=0$, then $\bar{p^i}=0$ and $e^i =0$. Advertiser $i$ will always lose the auction and have a $0$ utility. Thus, advertiser $i$ can deviate to a positive $\alpha^{i*}$ to have a positive utility.

If $\alpha^{i*}>0$,
we show that given an $\alpha^i>0$, the optimal $b_i=\frac{m^i}{\alpha^i}$, i.e., $\pi^i(\frac{m^i}{\alpha^i},\alpha^i|e^k)\ge\pi^i(b^i,\alpha^i|e^k)$ for any $e^k$.

When $b^i>\frac{m^i}{\alpha^i}$,
\begin{align*}
&\pi^i(b^i,\alpha^i|e^k)
\\&=
Pr(\bar{C^i}\bar{P^i}\frac{m^i}{\alpha^i}>e^k)E(C^iP^im^i-C^i\alpha^iP^i \frac{e^k}{\bar{C^i}\bar{P^i}}|\bar{C^i}\bar{P^i}\frac{m^i}{\alpha^i}>e^k)
\\&
+Pr(\bar{C^i}\bar{P^i}b^i>e^k>\bar{C^i}\bar{P^i}\frac{m^i}{\alpha^i})
\\&E(C^iP^im^i-C^i\alpha^iP^i \frac{e^k}{\bar{C^i}\bar{P^i}}|\bar{C^i}\bar{P^i}b^i>e^k>\bar{C^i}\bar{P^i}\frac{m^i}{\alpha^i})
\\&
<=
Pr(\bar{C^i}\bar{P^i}\frac{m^i}{\alpha^i}>e^k)E(C^iP^im^i-C^i\alpha^iP^i \frac{e^k}{\bar{C^i}\bar{P^i}}|\bar{C^i}\bar{P^i}\frac{m^i}{\alpha^i}>e^k) 
\\&
= \pi^i(\frac{m^i}{\alpha^i},\alpha^i|e^k)
\end{align*}

When $b^i<\frac{m^i}{\alpha^i}$,
\begin{align*}
&\pi^i(\frac{m^i}{\alpha^i},\alpha^i|e^k)
\\&=
Pr(\bar{C^i}\bar{P^i}b^i>e^k)E(C^iP^im^i-C^i\alpha^iP^i \frac{e^k}{\bar{C^i}\bar{P^i}}|\bar{C^i}\bar{P^i}b^i>e^k)
\\&
+Pr(\bar{C^i}\bar{P^i}b^i<e^k<\bar{C^i}\bar{P^i}\frac{m^i}{\alpha^i})
\\&
E(C^iP^im^i-C^i\alpha^iP^i \frac{e^k}{\bar{C^i}\bar{P^i}}|\bar{C^i}\bar{P^i}b^i<e^k<\bar{C^i}\bar{P^i}\frac{m^i}{\alpha^i})
\\&
>=
Pr(\bar{C^i}\bar{P^i}\frac{m^i}{\alpha^i}>e^k)
E(C^iP^im^i-C^i\alpha^iP^i \frac{e^k}{\bar{C^i}\bar{P^i}}|\bar{C^i}\bar{P^i}\frac{m^i}{\alpha^i}>e^k) 
\\&
= \pi^i(b^i,\alpha^i|e^k)    
\end{align*}
So $\pi^i(\frac{m^i}{\alpha^i},\alpha^i|e^k)\ge\pi^i(b^i,\alpha^i|e^k)$ for any $e^k$.

Then, when $b_i=\frac{m^i}{\alpha^i}$,
\begin{align*}
&\pi^i(\alpha^i|e^k)
\\&= 
Pr(\bar{C^i}\bar{P^i}\frac{m^i}{\alpha^i}>e^k)E(C^iP^im^i-C^i\alpha^iP^i \frac{e^k}{\bar{C^i}\bar{P^i}}|\bar{C^i}\bar{P^i}\frac{m^i}{\alpha^i}>e^k)
\\&=
Pr(\bar{C^i}\alpha^{i*}P^i\frac{m^i}{\alpha^i}>e^k)E(C^iP^im^i-\frac{\alpha^i}{\alpha^{i*}}e^k|\bar{C^i}\alpha^{i*}P^i\frac{m^i}{\alpha^i}>e^k)
\end{align*}
Note that $\pi_i$ increases when $\alpha^i$ decreases.
By deviating to $\alpha^i=\epsilon\alpha^{i*}$ where $0<\epsilon<1$, advertiser $i$ gets a higher utility.

\subsection{Proof of Proposition \ref{pro:i_OCPC}} \label{app:i_OCPC}
We first show that $\alpha^i$ and $b^i$ always appear together. So we represent $\pi^i(b^i,\alpha^i|e^k)$ by $\pi^i(\alpha^ib^i|e^k)$ for convenience.
\begin{align*}
&\pi^i(b^i,\alpha^i|e^k)= Pr(\bar{C^i}\bar{P^i}b^i>e^k)E(C^iP^im^i-C^i \frac{e^k}{\bar{C^i}}|\bar{C^i}\bar{P^i}b^i>e^k)
\\&=
Pr(\bar{C^i}P^i\alpha^ib^i>e^k)E(C^iP^im^i-C^i \frac{e^k}{\bar{C^i}}|\bar{C^i}P^i\alpha^ib^i>e^k)
\\&=\pi^i(\alpha^ib^i|e^k)
\end{align*}
We then show that $\pi^i(m^i|e^k)\ge\pi^i(\alpha^ib^i|e^k)$ for any $e^k$.

When $\alpha^ib^i>m^i$,
\begin{align*}
&\pi^i(\alpha^ib^i|e^k)
\\&=
Pr(\bar{C^i}P^im^i>e^k)E(C^iP^im^i-C^i \frac{e^k}{\bar{C^i}}|\bar{C^i}P^im^i>e^k)
\\&
+Pr(\bar{C^i}P^i\alpha^ib^i>e^k>\bar{C^i}P^im^i)
\\&E(C^iP^im^i-C^i \frac{e^k}{\bar{C^i}}|\bar{C^i}P^i\alpha^ib^i>e^k>\bar{C^i}P^im^i)
\\&
<=
Pr(\bar{C^i}P^im^i>e^k)E(C^iP^im^i-C^i \frac{e^k}{\bar{C^i}}|\bar{C^i}P^im^i>e^k) = \pi^i(m^i|e^k)
\end{align*}
When $\alpha^ib^i<m^i$,
\begin{align*}
&\pi^i(m^i|e^k)
\\&=
Pr(\bar{C^i}P^i\alpha^ib^i>e^k)E(C^iP^im^i-C^i \frac{e^k}{\bar{C^i}}|\bar{C^i}P^i\alpha^ib^i>e^k)
\\&
+Pr(\bar{C^i}P^i\alpha^ib^i<e^k<\bar{C^i}P^im^i)
\\&E(C^iP^im^i-C^i \frac{e^k}{\bar{C^i}}|\bar{C^i}P^i\alpha^ib^i<e^k<\bar{C^i}P^im^i)
\\&
>=
Pr(\bar{C^i}P^i\alpha^ib^i>e^k)E(C^iP^im^i-C^i \frac{e^k}{\bar{C^i}}|\bar{C^i}P^i\alpha^ib^i>e^k) 
\\&= \pi^i(\alpha^ib^i|e^k)
\end{align*}

\kc{
\subsection{Proof of Lemma \ref{lemma:social}} \label{app:lemma_social}
\begin{align*}
&E^S_{OCPC}=\sum_{i=1}^2 E(C^iP^im^i|i=argmax_k C^kP^km^k)Pr(i=argmax_k C^kP^km^k)
\\&E^S_{CPC}=\sum_{j=1}^2 E(C^jP^jm^j|j=argmax_k C^k\mu^p_km^k)Pr(j=argmax_k C^k\mu^p_km^k)
\\&E^S_{OCPC}-E^S_{CPC}=
\\&\sum_{i,j\ne i}E(C^iP^im^i-C^jP^jm^j|i=argmax_k C^kP^km^k,j=argmax_k C^k\mu^p_km^k)
\\&>0
\end{align*}
because $i=argmax_k C^kP^km^k$.

\subsection{Proof of Lemma \ref{lemma:plat}} \label{app:lemma_plat}
\begin{align*}
&E^P_{OCPC}=\sum_{i=1}^2 E(C^iP^im^i|i=argmin_k C^kP^km^k)Pr(i=argmin_k C^kP^km^k)
\\&E^P_{CPC}=\sum_{j=1}^2 E(C^jP^jm^j|j=argmin_k C^k\mu^p_km^k)Pr(j=argmin_k C^k\mu^p_km^k)
\\&E^P_{OCPC}-E^P_{CPC}=
\\&\sum_{i,j\ne i}E(C^iP^im^i-C^jP^jm^j|i=argmin_k C^kP^km^k,j=argmin_k C^k\mu^p_km^k)
\\&<0
\end{align*}
because $i=argmin_k C^kP^km^k$.

\subsection{Proof of Lemma \ref{lemma:adv}} \label{app:lemma_adv}
If advertiser $i$ wins in both OCPC and CPC, its payoffs are the same ($C^iP^im^i-C^kP^km^k$) in both OCPC and CPC. If advertiser $i$ loses in both OCPC and CPC, its payoffs are also the same ($0$). The contribution of such cases to $\pi^i_{OCPC}-\pi^i_{CPC}$ is 0. As a result, 
\begin{align*}
&\pi^i_{OCPC}-\pi^i_{CPC}=
\\&E(-(C^iP^im^i-C^kP^km^k)|C^iP^im^i<C^kP^km^k,C^i\mu^p_im^i>C^k\mu^p_km^k)
\\&Pr(C^iP^im^i<C^kP^km^k,C^i\mu^p_im^i>C^k\mu^p_km^k) +
\\&E(C^iP^im^i-C^kP^km^k|C^iP^im^i>C^kP^km^k,C^i\mu^p_im^i<C^k\mu^p_km^k)
\\&Pr(C^iP^im^i>C^kP^km^k,C^i\mu^p_im^i<C^k\mu^p_km^k)>0
\end{align*}
because both terms are positive.
}

\subsection{Extensions} \label{app:extensions}
\subsubsection{\textbf{Repeated Auctions}} We extend our model to the repeated auction setting, i.e., multiple auctions are in the game. Specifically, the timeline of the new game is as follows:

\begin{enumerate}
    \item The platform selects a model among CPC, CPA, or OCPC.
    \item Advertiser 2 determines whether to enter the auction.
    \item Advertisers select their bids and, if in the out-site scenario, also decide on their reporting strategy $\alpha^i$.
    \item The actual values of $c^i$ and $p^i$ are realized, while the platform predicts $\bar{c^i}$ and $\bar{p^i}$, and calculates $e^i$.
    \item The platform identifies the auction winner ($w=argmax_i e^i$) and display its ad.
    \item The winner makes payments and all parties realize payoffs.
    \item The steps (3) to (6) will be repeated $T$ times.
\end{enumerate}
Compared to the original game, the new game now has $T$ auctions. The overall utility of the game is the sum of the utility of each auction.

We show that the repeated auction setting has the same bidding strategies as the original setting.

Using CPC as an example, the utility of advertiser $i$ is:
\begin{equation*}
\sum^T_{t=1}\pi^i_t(b^i_t|e^k_t)
= \sum^T_{t=1} Pr(\bar{C^i}b^i_t>e^k_t)E(C^iP^im^i-C^i \frac{e^k_t}{\bar{C^i}}|\bar{C^i}b^i_t>e^k_t)
\end{equation*}
where another advertiser $k$'s equivalent bid on an impression in auction $t$ is $e^k_t$, and advertiser $i$ bid in auction $t$ is $b^i_t$.

We notice that bidding $b^i_t=\mu^p_im^i$ is the dominant strategy in each auction, regardless of how others bid ($e^k_t$). As a result, we have:
\begin{equation*}
max\sum^T_{t=1}\pi^i_t(b^i_t|e^k_t)=\sum^T_{t=1}\pi^i_t(\mu^p_im^i|e^k_t)
= T\pi^i(\mu^p_im^i|e^k)
\end{equation*}
where advertiser $i$'s utility in the repeated auctions setting is exactly $T$ times the original setting. Consequently, the platform's utility is also $T$ times as much as before.

The above analysis demonstrates that the analysis applied to the original setting also applies to the repeated auctions setting, since each agent's utility is scaled up by a factor of $T$.  

\newpage
\subsubsection{\textbf{Multiple Advertisers}} We extend our model to the multiple advertiser setting, where there are $N$ advertisers in the game. 

We show that our main insights also hold in the multiple advertiser setting: for online advertising platforms, (1) OCPC dominates CPA; (2) OCPC does not dominate CPC. 

We define the highest equivalent bid on an impression of other advertisers as $e^{-i}$ for advertiser $i$. By its definition, 
\begin{equation*}
e^{-i}=max_{k\ne i}e^k
\end{equation*}

In the in-site scenario, advertiser $i$'s utility in CPA and OCPC are respectively
\begin{align*}
&\pi^i_{I-CPA}(b^i|e^{-i})=\\ &Pr(\bar{C^i}\bar{P^i}b^i>e^{-i})E(C^iP^im^i-C^iP^i \frac{e^{-i}}{\bar{C^i}\bar{P^i}}|\bar{C^i}\bar{P^i}b^i>e^{-i})
\end{align*}
and
\begin{align*}
&\pi^i_{I-OCPC}(b^i|e^{-i})=\\ &Pr(\bar{C^i}\bar{P^i}b^i>e^{-i})E(C^iP^im^i-C^i \frac{e^{-i}}{\bar{C^i}}|\bar{C^i}\bar{P^i}b^i>e^{-i})
\end{align*}

Similarly, in the out-site scenario, advertiser $i$'s utility in CPA and OCPC are respectively 
\begin{align*}
&\pi^i_{O-CPA}(b^i,\alpha^i|e^{-i})=\\ &Pr(\bar{C^i}\bar{P^i}b^i>e^{-i})E(C^iP^im^i-C^i\alpha^iP^i \frac{e^{-i}}{\bar{C^i}\bar{P^i}}|\bar{C^i}\bar{P^i}b^i>e^{-i})
\end{align*}
and
\begin{align*}
&\pi^i_{O-OCPC}(b^i,\alpha^i|e^{-i})=\\ &Pr(\bar{C^i}\bar{P^i}b^i>e^{-i})E(C^iP^im^i-C^i \frac{e^{-i}}{\bar{C^i}}|\bar{C^i}\bar{P^i}b^i>e^{-i})
\end{align*}

Note that the utility of advertiser $i$ in the multiple advertiser setting can be derived by simply replacing $e^k$ with $e^{-i}$. This is because the second-price auction format ensures that whether advertiser $i$ wins or loses, and how much it pays, depends only on the highest one among the others. Consequently, the proofs in the original setting also apply to the multiple advertiser setting by replacing $e^k$ with $e^{-i}$.

As a result, OCPC also dominates CPA, given its broader applicability to out-site scenarios while maintaining the same performance in in-site scenarios as CPA.

To show that OCPC does not dominate CPC, we should find a counterexample where CPC leads to a higher payoff than OCPC for platforms. Obviously, by setting $N=2$, we have found such a counterexample in this paper's main body.

\clearpage
\raggedbottom

\end{document}